\newtheorem{theorem}{Theorem}
\newtheorem{lemma}{Lemma}
\newtheorem{definition}{Definition}
\newtheorem{corollary}{Corollary}
\newtheorem{proposition}{Proposition}
\newcommand{\eps}{\varepsilon}
\DeclareMathOperator\OPT{OPT}
\DeclareMathOperator\ALG{ALG}
\DeclareMathOperator\LQD{LQD}
\def\A{{\mathcal{A}}}
\def\B{{\mathcal{B}}}
\newcommand\Q[1]{\ensuremath{\mathcal{Q}^{(#1)}}}
\def\oopt{\textsc{LateQD}}
\DeclarePairedDelimiter\set{\lbrace}{\rbrace}
\DeclarePairedDelimiter\ceil{\lceil}{\rceil}
\DeclarePairedDelimiter\floor{\lfloor}{\rfloor}
\newcommand{\map}[2]{\set{#1 \mid #2}}
\def\myscalefigone{1.1}
\def\myscalefigtwo{1.1}
\def\myscale{1.}
\title{New Competitiveness Bounds for the Shared Memory Switch\thanks{This work was supported by the Russian Science Foundation grant no.~17-11-01276.}}
\author[1,2]{Ivan Bochkov}
\author[2,3]{Alex Davydow}
\author[1,2]{\\Nikita Gaevoy}
\author[2,3]{Sergey I. Nikolenko}
\affil[1]{St. Petersburg State University, St. Petersburg, Russia}
\affil[2]{Steklov Institute of Mathematics at St. Petersburg, Russia}
\affil[3]{Neuromation OU, Tallinn, Estonia}
\tikzset{%
        brace/.style = { decorate, decoration={brace, amplitude=5pt} },
       mbrace/.style = { decorate, decoration={brace, amplitude=5pt, mirror} },
        label/.style = { black, midway, scale=0.5, align=center },
     toplabel/.style = { label, above=.5em, anchor=south },
    leftlabel/.style = { label,rotate=-90,left=.5em,anchor=north },   
  bottomlabel/.style = { label, below=.5em, anchor=north },
        force/.style = { rotate=-90,scale=0.4 },
        round/.style = { rounded corners=2mm },
       legend/.style = { right,scale=0.4 },
        nosep/.style = { inner sep=0pt },
   generation/.style = { anchor=base },
          tbl/.style = { font=\bfseries\boldmath }
}
\colorlet{color02}{red!25!white}
\colorlet{color01}{orange!25!white}
\colorlet{color03}{green!25!white}
\colorlet{color04}{blue!25!white}
\colorlet{color04a}{blue!5!white}
\colorlet{color05}{magenta!25!white}
\colorlet{color06}{black!15!white}
\colorlet{color07}{white}
\colorlet{color2}{red!80!black}
\colorlet{color2a}{red!50!white}
\colorlet{color1}{orange!80!black}
\colorlet{color1a}{orange!50!white}
\colorlet{color3}{green!50!black}
\colorlet{color3a}{green!90!black}
\colorlet{color4}{blue!80!black}
\colorlet{color4a}{blue!50!white}
\colorlet{color5}{magenta!80!black}
\colorlet{color5a}{magenta!50!white}
\colorlet{color6}{black!80!black}
\colorlet{color6a}{black!30!white}
\colorlet{color7}{black!80!black}
\colorlet{color7a}{black!50!white}
\colorlet{color8}{black!40!white}
\colorlet{color9}{yellow!50!black}
\newcommand\tikzpacketnode[2]{%
  \begin{tikzpicture}[x=1cm, y=1cm, scale=0.9, font=\sffamily\sansmath]
    \draw[fill=#1,blur shadow={shadow blur steps=5}] (0.05,0) -- (0.45,0)
        arc (90:0:.5mm) -- (0.5,-0.45) arc (0:-90:.5mm) -- (0.05,-.5)
        arc (-90:-180:.5mm) -- (0,-0.05) arc(180:90:.5mm) -- cycle;
    \ifstrempty{#2}{}{\node at(0.25,-0.25)  [nosep,align=center,text centered,scale=0.8] {$\boldsymbol{#2}$};}
  \end{tikzpicture}
}
\newcommand\tikzpacket[4]{%
    \node at (#1, #2) {\tikzpacketnode{#3}{#4}};
}
\def\tbly{0.55}
\def\tblx{0.55}
\def\tblxstart{0.0}
\def\tblystart{0.0}
\newcommand\tikzpacketcirc[5]{%
    \node at (#1, #2) {\tikzpacketnode{#3}{#4}};
    \pgfmathsetmacro\x{#1-(.3)*\tblx}
    \pgfmathsetmacro\y{#2-(.25)*\tbly}
    \node[circle,draw=black,fill=blue!20!white,inner sep=1.2pt] at (\x, \y) {{\scriptsize\bfseries #5}};
}
\newcommand\tikzroundedrect[5]{%
    \pgfmathsetmacro\x{(#1-.65)*\tblx}
    \pgfmathsetmacro\y{(#2-.65)*\tbly}
    \pgfmathsetmacro\xx{(#3+.55)*\tblx}
    \pgfmathsetmacro\yy{(#4+.7)*\tbly}
    \draw[rounded corners] (\x, \y) rectangle (\xx, \yy);
    \node[anchor=north east] at (\xx, \yy) {#5};
}
\newcommand\tbltext[3]{%
    \pgfmathsetmacro\x{\tblxstart+(#1)*\tblx}
    \pgfmathsetmacro\y{\tblystart+(#2)*\tbly}
    \node at (\x, \y) [nosep,align=center,text centered,scale=0.8] {#3};
}
\newcommand\tblp[4]{%
    \pgfmathsetmacro\x{\tblxstart+(#1)*\tblx}
    \pgfmathsetmacro\y{\tblystart+(#2)*\tbly}
    \tikzpacket{\x}{\y}{#3}{#4};
}
\newcommand\tblc[5]{%
    \pgfmathsetmacro\x{\tblxstart+(#1)*\tblx}
    \pgfmathsetmacro\y{\tblystart+(#2)*\tbly}
    \tikzpacketcirc{\x}{\y}{#3}{#4}{#5};
}
\newcommand\tblhseq[4]{%
 \foreach \x in {#1,...,#2} {
    \tblp{\x}{#3}{#4}{};
 }
}
\begin{document}

\maketitle

\begin{abstract}
We consider one of the simplest and best known buffer management architectures: the shared memory switch with multiple output queues and uniform packets. It was one of the first models studied by competitive analysis, with the Longest Queue Drop (LQD) buffer management policy shown to be at least $\sqrt{2}$- and at most $2$-competitive; a general lower bound of $4/3$ has been proven for all deterministic online algorithms. Closing the gap between $\sqrt{2}$ and $2$ has remained an open problem in competitive analysis for more than a decade, with only marginal success in reducing the upper bound of $2$. In this work, we first present a simplified proof for the $\sqrt{2}$ lower bound for LQD and then, using a reduction to the continuous case, improve the general lower bound for all deterministic online algorithms from $\frac 43$ to $\sqrt{2}$. Then, we proceed to improve the lower bound of $\sqrt{2}$ specifically for $\LQD$, showing that $\LQD$ is at least $1.44546086$-competitive. We are able to prove the bound by presenting an explicit construction of the optimal clairvoyant algorithm which then allows for two different ways to prove lower bounds: by direct computer simulations and by proving lower bounds via linear programming. The linear programming approach yields a lower bound for $\LQD$ of $1.4427902$ (still larger than $\sqrt{2}$).

\textbf{Keywords}: shared memory switch, longest queue drop, competitive analysis.
\end{abstract}


\section{Introduction}\label{sec:intro}

Buffering architectures define how input and output ports of a network element are connected, and buffer management policies determine how individual packets get processed. While classical works on buffer management used stochastic models to evaluate the performance of various policies, in modern networking a network edge has to process increasingly diverse and unpredictable incoming traffic, which leads to the need for worst-case guarantees.

Such guarantees can be provided by \emph{competitive analysis}, an approach originally applied to the analysis of online algorithms in the 1980s~\cite{Borodin-ElYaniv} but since the early 2000s increasingly used to study buffer management policies. An online algorithm is said to be \emph{$\alpha$-competitive} if for any input (any possible incoming sequence of packets) it achieves total value at least $\alpha$ times less than what a clairvoyant offline algorithm could achieve on the same sequence. Competitive analysis allows to obtain worst-case guarantees: an upper bound on competitiveness means that an algorithm does not lose too much on \emph{any} input sequence. Over the last two decades, lower and upper bounds on the competitiveness of various buffer management policies have been proven in many different settings; for detailed surveys of the field we refer to~\cite{NK14,G10,ES04}.

One of the foundational works that introduced competitive analysis into buffer management was the work by Hahne, Kesselman, and Mansour~\cite{HK+01}, later extended by Aiello, Kesselman, and Mansour~\cite{AKM08}. These works considered one of the simplest nontrivial buffering architectures: a shared memory switch with multiple output queues and uniform (identical) packets. Incoming packets in this model are destined to one of the several output queues, which share a common buffer of finite size $B$. A buffer management policy for the shared memory switch must decide which queues to push packets out from when the buffer overflows, with the purpose of achieving maximal throughput (equivalently, dropping as few packets as possible).

For the shared memory switch, the works~\cite{HK+01,AKM08} introduced a very natural online algorithm, the Longest Queue Drop (LQD). It pushes packets out of the longest queues, trying to equalize queue sizes and thus keep as many queues as possible transmitting packets at the same time. As for the competitiveness of LQD, the works~\cite{HK+01,AKM08} showed a lower bound of~$\sqrt{2}$ and an upper bound of $2$. Another interesting result was a general lower bound of $4/3$ established for the competitive ratio of any deterministic online algorithm.

Since then, the problem of closing the gap between $\sqrt{2}$ and $2$ for LQD, as well as between $4/3$ and $2$ for all online algorithms, has been one of the key open problems in theoretical analysis of buffer management policies. In particular, it was listed as an important open problem in a \emph{SIGACT News} survey by Goldwasser~\cite{G10}. So far the only new result in this specific setting has been provided by Kobayashi, Miyazaki, and Okabe~\cite{Kobayashi:2008:TBO:1522914.1522915}, who improved the upper bound to $2-\frac{1}{B}\min_{i=1}^N\left\{\left\lfloor\frac{B}{i}\right\rfloor+i-1\right\}$, where $N$ is the number of output queues and $B$ is the size of the buffer. This bound tends to $2$ as $B$ and $N$ tend to infinity, but it still shows an important point: it has turned out that $2$ is not a crucial number for this case, and it can be potentially improved.

In this work, we introduce novel techniques for the analysis of online policies for the shared memory switch with uniform packets and make the next steps towards closing the gap between lower and upper bounds on their competitiveness. Our first result here is an explicit construction for the optimal offline clairvoyant algorithm, and the second is a generalized construction of a set of hard instances that we use in the lower bounds. Using a novel approach to proving lower bounds on competitive ratios through this construction, we present a simplified proof of the lower bound of $\sqrt{2}$ for LQD for a shared memory switch. Moreover, the new simplified proof generalizes well, which allows us to obtain the main result of this work: improve the \emph{general} lower bound presented in~\cite{AKM08} from $\frac 43$ to $\sqrt{2}$.

Note that once an efficient construction of the optimal algorithm has been found (ours is actually linear), one can look for hard instances by running computer simulations, comparing online policies such as $\LQD$ against the optimal algorithm. To this end, we develop a special form of input instances and find a representation of online algorithms such that the number of processed packets can be found as a solution of a linear programming problem. Moreover, we implement an explicit form of the optimal algorithm and $\LQD$ and run computer simulations on the presented hard instances. As a result, we obtain a lower bound on the competitiveness of $\LQD$ of $1.44546086$ (which is better than previously known $\sqrt{2}$) and a general lower bound of $1.32742316$ (worse even than previously known $\frac 43$ and, naturally, the newly proven $\sqrt{2}$) for any deterministic online algorithm based on these simulations. Thus, we are bringing the general lower bound to the former $\LQD$ bound of $\sqrt{2}$ and at the same time improving the $\LQD$ bound further.

The paper is organized as follows. Section~\ref{sec:related} surveys related work, and Section~\ref{sec:model} formally introduces the model. In Section~\ref{sec:opt} we present the construction of the optimal offline algorithm $\oopt$ and prove its optimality. Section~\ref{sec:extensions} introduces two important extensions to the model: fractional packets (Section~\ref{sec:frac}) and our construction of a family of hard instances (Section~\ref{sec:hard}). With the help of these extensions, Section~\ref{sec:lqd} presents a simplified proof of the $\sqrt{2}$ lower bound for $\LQD$. Section~\ref{sec:general} presents our main theoretic result: a general lower bound of $\sqrt{2}$ for any deterministic online algorithm. Section~\ref{sec:linprog} shows how to reduce finding the number of processed packets for buffer management algorithms to solving linear programming problems, Section~\ref{sec:eval} presents the results of our simulation study, and Section~\ref{sec:concl} concludes the paper.

\section{Related work}\label{sec:related}

In this work, we consider the setting of a shared memory switch that receives identical incoming packets, each destined for one of the $N$ output queues that share a total memory of $B$. We have already outlined existing competitive analysis results for policies with push-out in Section~\ref{sec:intro}. The case of non-push-out policies, which make admission decisions but then are not allowed to drop already accepted packets, was studied in the work~\cite{KM04}, where a non-constant general lower bound of $\frac{\log N}{\log\log N}$ on the competitiveness of any online deterministic algorithm is presented together with a specific algorithm that achieves an upper competitiveness bound of $\ln N+2$.

Subsequent works considered other buffer management settings, extending either the buffering architecture (different configurations of input and output queues), packet characteristics (making packets non-uniform), or both.

The majority of works on packets with varying characteristics dealt with the \emph{values} of packets, i.e., the setting where a packet is characterized by a numerical value and the objective is to maximize the total transmitted value. For a single queue, the optimal competitive ratio for non-push-out policies was shown to be $\ln V$, with tightly matching bounds of $1+\ln V$ and $2+\ln V + O(\ln^2 V / B)$~\cite{AMZ03, Zhu04}, where $V$ is the maximal possible value of a packet. With push-out, the PQ policy (Priority Queue) that transmits largest values first and drops smallest values is obviously optimal. An important special case here is when the policy has to preserve FIFO ordering of packets; here, a general lower bound of $1.419$ was shown for all online algorithms~\cite{KMS05}, with a stronger bound of $1.434$ for $B=2$~\cite{AMZ03, Zhu04}, and the FIFO greedy push-out policy (push out the packet with smallest value) has been shown to be $2$-competitive~\cite{KLM04}. 

Another important characteristic is \emph{required processing}, when a packet needs from $1$ to $k$ time slots at the processor before it can be transmitted. In the single queue case, any greedy non-push-out policy is at least $\frac12(k+1)$-competitive, and for the push-out case PQ is again optimal~\cite{KKSS12,KLNS+12}. With the FIFO requirement, the class of so-called \emph{lazy} policies has been studied in~\cite{KLNS+12,KLNS12,KLNS16}, including the Lazy PQ policy shown to be $2$-competitive and a general upper bound of $\frac 1B\log_{\frac{B}{B-1}}k +3$ on the competitive ratio of any lazy policy
and a matching lower bound of $\frac 1B\log_{\frac{B}{B-1}}k +1$ for several processing orders.

Variable values and variable required processing have been coming together in recent works~\cite{CNK15,CNKD18}. In this case, priority queues are no longer trivially optimal, several different priorities are possible, and the results deal with competitive ratios of different priority queues. Finally, yet another important possible characteristic of a packet is the \emph{deadline}, or \emph{slack}, where a packet must be transmitted before a certain predefined time~\cite{DCNK17}.

As for architectures, another important case is the case of multiple separated queues, where each queue has its own memory, and a policy must choose which input queue to transmit a packet from and set admission policies for input queues. For the case of uniform packets, a buffer with independent queues, where each of $N$ input queues has a separate independent buffer of size $B$, has been considered in~\cite{AL06}. This is a rare case where the problem has been closed completely: the work~\cite{AL06} presents a deterministic policy with competitive ratio converging to $\frac{e}{e-1}\approx 1.582$ for arbitrary $B$, and a matching lower bound has been proven in~\cite{AR05}. The case of multiple separated queues for packets with variable values has been considered in~\cite{KKM12,AR05,KMO09}; for packets with variable required processing, in~\cite{KLNS13}, where a $2$-competitive policy was introduced for that setting, and in~\cite{EKNS14}, where each queue is constrained to contain packets with the same processing requirement.

Combined input-output queued (CIOQ) switches have been considered in~\cite{KesselmanR06,AzarR06,KesselmanKS12,KesselmanR08}, with constant competitiveness achieved for certain policies. Buffered crossbar switch architectures, with a small buffer on every crosspoint in addition to input and output queues, have been studied in~\cite{KesselmanKS12} for uniform packets and in~\cite{KKM+10} for packets with variable values, again with constant competitiveness bounds but in the latter case with rather large constants, about $19.95$~\cite{KKM+10}. The currently best results in this direction were provided by Al-Bawani et al.~\cite{Al-Bawani2018}, who presented a faster $3$-competitive algorithm for uniform packets, a $5.83$-competitive algorithm for CIOQ switch, and a $14.83$-competitive algorithm for a buffered crossbar switch.

Thus, at present competitive analysis is an important tool for studying buffer management policies in various settings, with many different architectures and packet characteristics defining the landscape of the field. However, only in rare cases this research has already produced tight matching bounds on the competitiveness of specific policies. In this work, we go back to the foundational works~\cite{HK+01,AKM08}  and one of the first and arguably simplest models, a shared memory switch with uniform packets and push-out, and improve upon the lower bound of $\sqrt{2}$ on the competitiveness of $\LQD$ and upon the general lower bound of $\frac 43$ for an arbitrary deterministic online algorithm.

\section{Model description}\label{sec:model}

\subsection{Basic model}\label{sec:basic}

\begin{figure}[t]\centering
\scalebox{\myscalefigone}{
\begin{tikzpicture}[x=1.2cm, y=1.2cm, scale=0.8]
    \tbltext{-5}{2}{${\cal Q}_1$};
    \tbltext{-5}{1}{${\cal Q}_2$};
    \tbltext{-5}{0}{${\cal Q}_3$};

	\tbltext{-1.5}{4.2}{{Arrivals}};
	\tbltext{3.5}{4.2}{{Buffer before admission}};

	\tblhseq{-3}{-1}{0}{color07};
	\tblhseq{-4}{-1}{1}{color07};
	\tblhseq{-2}{-1}{2}{color07};
	
	\tblhseq{1}{6}{0}{color07};
	\tblhseq{3}{6}{1}{color07};
	\tblhseq{5}{6}{2}{color07};
	\tikzroundedrect{1}{0}{6}{3}{$\LQD$};

	\tbltext{12}{4.2}{{Virtual buffer during admission}};
	\tbltext{19}{4.2}{{After transmission}};

	\tblhseq{13}{16}{2}{color07};
	\tblhseq{9}{12}{1}{color02};
	\tblhseq{13}{16}{1}{color07};
	\tblhseq{8}{12}{0}{color02};
	\tblhseq{13}{16}{0}{color07};
	\tikzroundedrect{8}{0}{16}{3}{$\LQD$};

	\tblhseq{18}{20}{0}{color07};
	\tblhseq{18}{20}{1}{color07};
	\tblhseq{18}{20}{2}{color07};
	\tikzroundedrect{18}{0}{20}{3}{$\LQD$};
\end{tikzpicture}
}
    \caption{Sample operation of a shared memory switch with $N=3$ queues and $B=12$. After arrivals, the $\LQD$ policy pushes out packets shaded in red, equating the queue lengths.}
    \label{fig:setting}
\end{figure}
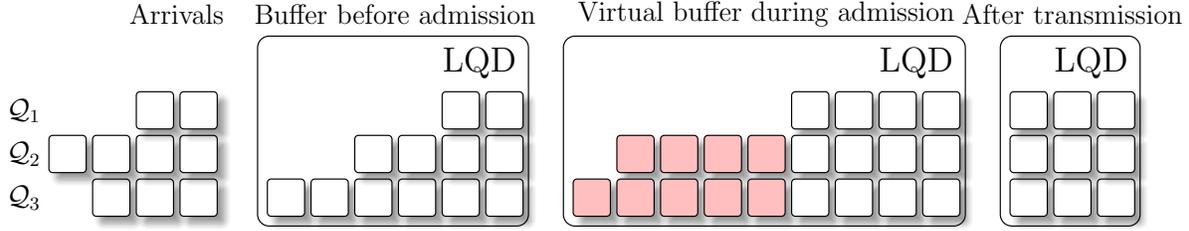

Consider a shared memory switch with one input port and $N$ output ports; in~\cite{AKM08}, the number of input ports was used to bound the number of incoming packets, but for simplicity we assume here that a single input port can receive arbitrarily many packets on every time slot. Each output port has its own queue of arbitrary size with the constraint that the total size of all queues does not exceed $B$ (shared memory switch with memory size $B$). Packets are uniform, i.e., they all have the same value and size, and each packet $p$ is labeled with its desired output port $i\in\{1,\ldots,N\}$.

A sequence of packets arrives at discrete slotted time. Each time slot consists of two phases: 
\begin{enumerate}[(1)]
\item on the \emph{arrival} phase of a time slot $t$, there arrives a set of packets $\A_t$; the buffer management policy is free to accept all or any packets from $\A_t$ in the buffer and push out any subset of packets from the buffer; we denote the set of packets in the buffer at the beginning of time slot $t$ by $\B_t$ and after arrival by $\B'_t$, so $\B'_t\subseteq \B_{t}\cup \A_t$ and $|\B'_t|\le B$;
\item on the \emph{transmission} phase, each output queue $\Q i$, $i=1,\ldots,N$, sends out its head of line packet; since packets are uniform, it does not matter which one to send out, so we treat $\Q i$ as simply a set of packets or even just a number of packets; thus, after transmission we get ${\Q{i}}'$ with $|{\Q{i}}'|=\max\left(0, |\Q{i}|-1\right)$, and $\B_{t+1}=\cup_{i=1}^N{\Q{i}}'$.
\end{enumerate}
We illustrate the model on Fig.~\ref{fig:setting}. The figure shows output queues along the vertical axis ($N=3$ in this case) and packets along the horizontal axis. We also show the ``virtual buffer state'' before push-out, when the buffer may be overflowing and has to be cut back to $B$; pushed out packets are shaded in red. After push-out, $\LQD$ has $4$ packets in every queue, and each queue transmits one packet, leaving the buffer state as shown on the right.

An input instance, or simply \emph{instance}, is a sequence of arrivals $\tau:{\mathbb N}\to\{1,\ldots,B\}^N$ that defines how many packets arrive at which destinations on every time step; an instance has \emph{duration} $T(\tau)$ if no more packets arrive after time $T$, i.e., $\forall t>T(\tau)\ \tau(t)=(0,\ldots,0)$. For every instance $\tau$ of finite duration and every online algorithm $\ALG$, $\ALG(\tau)$ is the total number of packets transmitted by $\ALG$ on input $\tau$ (over all time, i.e., up until at most time $T(\tau)+B$).

We denote by $\OPT$ the optimal offline algorithm that knows the entire sequence in advance and has infinite computational capacity. The \emph{competitive ratio} of an algorithm $\ALG$ is defined as the ratio of the number of transmitted packets of the optimal algorithm and $\ALG$ on the worst possible input instance of finite duration:
$$\alpha_{\ALG} = \sup_{\sigma:T(\tau)<\infty}\alpha_{\ALG}(\tau),\text{ where }\alpha_{\ALG}(\tau)=\frac{\OPT(\tau)}{\ALG(\tau)}.$$
If the competitive ratio is infinite, it is usually of interest to find its asymptotics with respect to model parameters, in this case $B$ and $N$.

Consider an increasing sequence of instances $(\tau)$, where $\tau_n$ has length $n$, and the first $n$ time slots of $\tau_{n + 1}$ coincide with $\tau_n$, i.e., $\tau_1\subseteq\tau_2\subseteq\ldots\subseteq\tau_n\subseteq\ldots$ For such a sequence, we call $\tau=\cup_{n=1}^{\infty}\tau_n$ an instance of infinite duration, or \emph{infinite instance}, and define the competitive ratio of an algorithm $\ALG$ as an upper limit of competitive ratios of $\ALG$ on $\tau_n$ as $n\to\infty$: $\alpha_{\ALG}(\tau) = \varlimsup_{n\to\infty}\alpha_{\ALG}(\tau_n)$. Note that an infinite instance can use an infinite number of output queues.

The setting and sample operation of the $\LQD$ policy are illustrated on Fig.~\ref{fig:setting} for the case of $N=3$ queues and $B=12$ with initial buffer state $|\Q1|=2$, $|\Q2|=4$, $|\Q3|=6$. After the arrivals as shown on the left, packets shaded in red are pushed out, and the queues have $4$ packets each; then, during transmission every queue transmits one packet, leaving the buffer state as shown on the right, with $|\Q{i}|=3$ for every $i$.

\section{Explicit construction of the optimal algorithm}\label{sec:opt}

In this section, we present an explicit construction for the optimal clairvoyant algorithm for the shared memory switch. This is a new result in itself, and in subsequent sections it will prove to be important for studying competitive ratios of online algorithms, both in theory and in computational experiments.

Since packets are uniform, without loss of generality we assume that packets are processed and transmitted in the LIFO order, and they are pushed out in FIFO order, i.e., dropping a packet out of queue $i$ means pushing out the packet that arrived first.

Now for every packet we define a time interval during which it can reside in its corresponding queue. To do that, we simply run the buffer management algorithm with unlimited buffer $B=\infty$. In this case, under the assumptions above there are no decisions to be made (no packets ever need to be pushed out).
We denote by $[b_p,e_p]$ the time interval that packet $p$ spends in the buffer; note that due to the LIFO processing order it might happen that $e_p=\infty$; we denote $\Delta_p=e_p-b_p$ (also possibly infinite).

\begin{definition}
A buffer management algorithm is called \emph{regular} if it satisfies the following conditions:
    \begin{itemize}
        \item it is \emph{work-conserving}, i.e., each queue always processes and transmits packets when it is non-empty;
        \item it discards packets only when the buffer overflows.
    \end{itemize}
\end{definition}

\begin{lemma}\label{lem:regular}
For any buffer management algorithm $\ALG$, there exists a regular algorithm $\ALG'$ that has at least the same number of transmitted packets on every input sequence.
\end{lemma}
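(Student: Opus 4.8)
The plan is to convert an arbitrary algorithm $\ALG$ into a regular one step by step, fixing one violation at a time, and arguing that each fix never decreases the number of transmitted packets. There are two properties to enforce: work-conservation (a non-empty queue must transmit on every time slot) and the push-out discipline (a packet is discarded only on an overflow, i.e. only when $|\B_t\cup\A_t|>B$ forces it). I would first observe that, since packets are uniform, the only thing that matters at any time is the vector of queue lengths $(|\Q1|,\ldots,|\Q N|)$, and the total transmitted count equals $\sum_t\#\{i:|\Q i_t|>0\}$. So it suffices to produce a regular algorithm whose queue-length vector dominates that of $\ALG$ coordinatewise at every time slot, or more loosely, whose number of non-empty queues is at least as large at every slot.

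The key step is an exchange argument. Suppose at some time slot $t$ the algorithm $\ALG$ either (a) drops a packet destined for queue $i$ while $|\B'_t|<B$, or (b) pushes out a packet from a non-empty queue that then fails to transmit, or (c) voluntarily withholds transmission from a non-empty queue. In each case I would define a modified algorithm $\ALG''$ that behaves identically up to time $t$, keeps the extra packet in queue $i$ (resp. transmits from the queue), and from then on maintains the invariant $|\Q i_s(\ALG'')|\ge|\Q i_s(\ALG)|$ for all $i$ and all $s\ge t$, while $\sum_i|\Q i_s(\ALG'')|\le\sum_i|\Q i_s(\ALG)|+1\le B$ remains feasible. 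The point is that having one extra packet somewhere can only be discarded later if a genuine overflow occurs, and a later overflow in $\ALG''$ corresponds to a state that was already ``full enough'' so no throughput is lost; meanwhile the extra packet may buy an extra transmission. Since each such correction strictly reduces (say) the number of non-regular events in a suitable well-founded measure — e.g. the first time slot at which a violation occurs moves strictly later, or the number of violations up to a fixed horizon decreases — iterating the correction (and taking a limit over the horizon, which is fine because every instance has finite duration $T(\tau)$ and the process terminates by time $T(\tau)+B$) yields a regular $\ALG'$ with $\ALG'(\tau)\ge\ALG(\tau)$.

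Concretely I would package this as: define the canonical invariant ``$\ALG''$ dominates $\ALG$ coordinatewise,'' prove it is maintained inductively through both phases of a time slot (arrival: $\ALG''$ accepts at least what $\ALG$ accepts plus possibly more, and only drops on true overflow; transmission: each non-empty queue decreases by exactly one, so domination is preserved, and any queue non-empty for $\ALG$ is non-empty for $\ALG''$), and conclude the throughput inequality from $\ALG''(\tau)=\sum_t\#\{i:|\Q i_t(\ALG'')|>0\}\ge\sum_t\#\{i:|\Q i_t(\ALG)|>0\}=\ALG(\tau)$. Then apply this finitely or countably many times, or simply define $\ALG'$ directly as ``the greedy regular algorithm that accepts whenever room permits and never voluntarily drops or withholds,'' run the domination argument once against $\ALG$, and be done.

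The main obstacle is the bookkeeping in the domination invariant across an overflow in $\ALG''$: one must check that whenever $\ALG''$ is forced to push a packet out because $|\B''_t\cup\A_t|>B$, it can choose \emph{which} packet to drop so that coordinatewise domination over $\ALG$ survives. This is where the freedom of which queue to cut matters — $\ALG''$ should drop from a queue where it has strict slack over $\ALG$ if such a queue exists, and if it does not, then $\ALG''$ and $\ALG$ have the same length vector and face the same overflow, so $\ALG''$ can just mimic $\ALG$'s drop. Handling the edge case where the slack is concentrated in queues that are already being forced to drop for other reasons requires a short but careful argument; I expect that to be the only genuinely delicate point, with everything else being a routine two-phase induction.
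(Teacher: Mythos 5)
Your proposal is correct and follows essentially the same route as the paper: an exchange argument that repairs violations of work-conservation and of the push-out discipline without losing throughput, the paper doing it tersely by fixing the first non-regular time slot (force the idle nonempty queue to transmit, or keep the unnecessarily dropped packet and mark it to be pushed out at the next genuine overflow) and iterating. Your one-shot ``greedy regular algorithm dominates $\ALG$ coordinatewise'' packaging is just a globalized version of the same idea, and your resolution of the delicate point (drop from queues with slack; the total slack always covers the forced overflow) is sound.
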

\begin{proof}
    We change $\ALG$ into $\ALG'$ packet by packet. Let $t$ be the first time slot when $\ALG$ becomes non-regular. If $\ALG$ has a queue $\Q i$ that is nonempty on time $t$ and does not transmit, let $\Q i$ transmit a packet; obviously, this does not increase memory requirements, and the new algorithm can operate exactly like $\ALG$ afterwards. If $\ALG$ pushed out a packet $p$ when its total occupied memory is less than $B$, let $p$ stay in the buffer and mark it to be the first one pushed out when the buffer overflows next time; obviously, the new algorithm again operates at least as well as $\ALG$.
\end{proof}

Thus, from now on we assume that all algorithms are work-conserving and push packets out only when necessary. This means that a buffer management algorithm in our setting (shared memory switch with uniform packets) is completely defined by the heuristic that chooses a queue to push out from when the buffer overflows. It turns out that the optimal algorithm is now easy to define.

\def\myxlateqd{4}
\def\myxinf{14}

\renewcommand{\tblx}{0.75}
\renewcommand{\tbly}{0.7}

\newcommand\myboxessetuplateqd[3]{
    \tbltext{-4}{#3}{{\bf $t=#1$}};
    \tikzroundedrect{-1}{#2}{\myxlateqd}{#3}{$\oopt$};
    \tikzroundedrect{\myxinf-8}{#2}{\myxinf}{#3}{$B=\infty$};
}

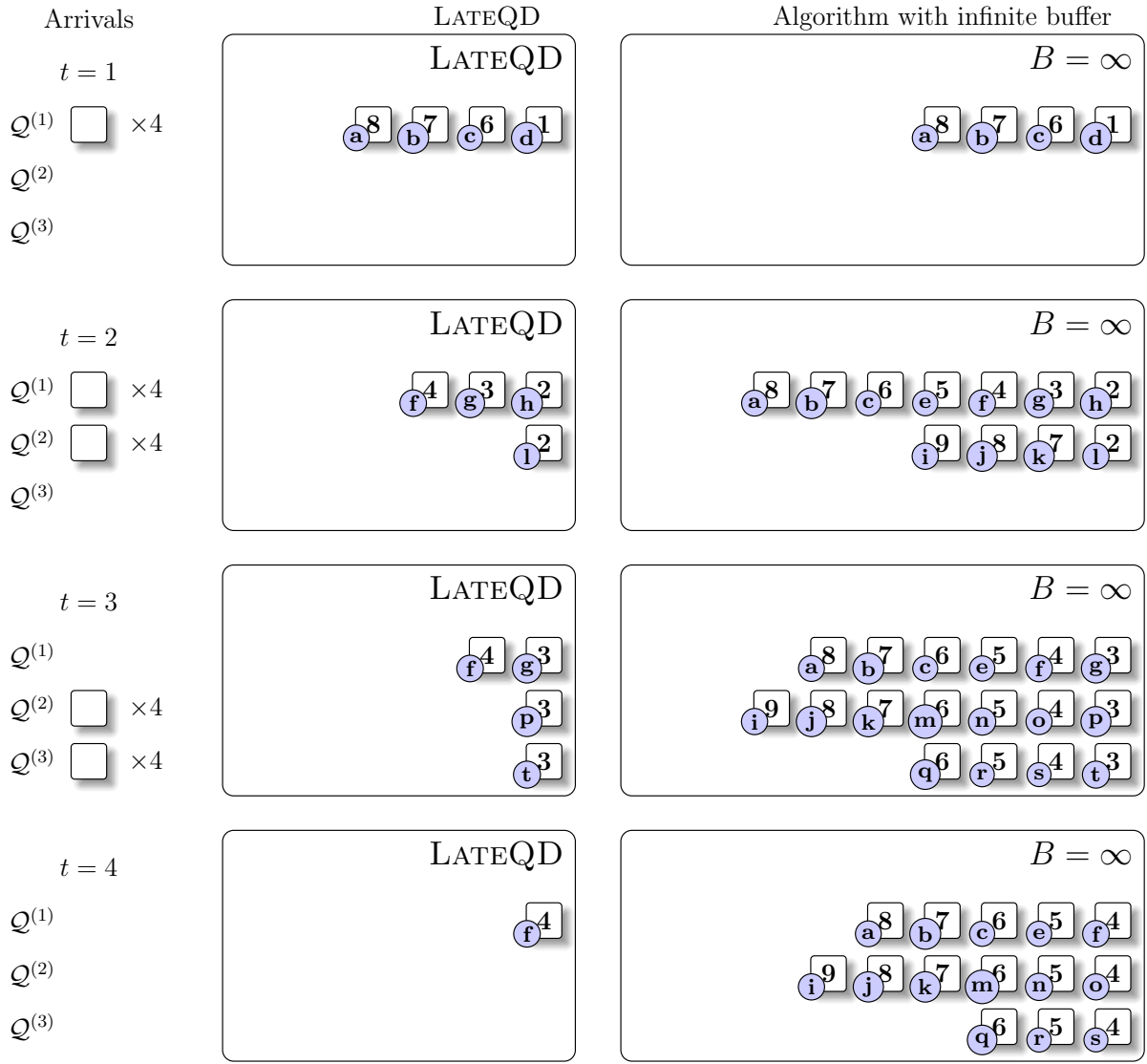
\begin{figure}[!t]\centering
\scalebox{\myscalefigtwo}{
\begin{tikzpicture}[x=1.2cm, y=1.2cm, scale=0.8]
    \tbltext{-5}{7}{$\Q 1$};
    \tbltext{-5}{6}{$\Q 2$};
    \tbltext{-5}{5}{$\Q 3$};
    \tbltext{-5}{2}{$\Q 1$};
    \tbltext{-5}{1}{$\Q 2$};
    \tbltext{-5}{0}{$\Q 3$};
    \tbltext{-5}{-3}{$\Q 1$};
    \tbltext{-5}{-4}{$\Q 2$};
    \tbltext{-5}{-5}{$\Q 3$};
    \tbltext{-5}{12}{$\Q 1$};
    \tbltext{-5}{11}{$\Q 2$};
    \tbltext{-5}{10}{$\Q 3$};

	\tbltext{-4}{14}{{Arrivals}};
    \tbltext{3}{14}{{$\oopt$}};
    \tbltext{11}{14}{{Algorithm with infinite buffer}};

    \tblp{-4}{12}{color07}{};
	\tbltext{-3}{12}{$\times 4$};

    \tblp{-4}{7}{color07}{};
    \tblp{-4}{6}{color07}{};
	\tbltext{-3}{7}{$\times 4$};
	\tbltext{-3}{6}{$\times 4$};

    \tblp{-4}{1}{color07}{};
    \tblp{-4}{0}{color07}{};
	\tbltext{-3}{1}{$\times 4$};
	\tbltext{-3}{0}{$\times 4$};
	
	\myboxessetuplateqd{1}{10}{13};
	\myboxessetuplateqd{2}{5}{8};
	\myboxessetuplateqd{3}{0}{3};
	\myboxessetuplateqd{4}{-5}{-2};



    \tblc{\myxinf-3}{12}{color07}{8}{a};
    \tblc{\myxinf-2}{12}{color07}{7}{b};
    \tblc{\myxinf-1}{12}{color07}{6}{c};
    \tblc{\myxinf}{12}{color07}{1}{d};

    \tblc{\myxinf-6}{7}{color07}{8}{a};
    \tblc{\myxinf-5}{7}{color07}{7}{b};
    \tblc{\myxinf-4}{7}{color07}{6}{c};
    \tblc{\myxinf-3}{7}{color07}{5}{e};
    \tblc{\myxinf-2}{7}{color07}{4}{f};
    \tblc{\myxinf-1}{7}{color07}{3}{g};
    \tblc{\myxinf}{7}{color07}{2}{h};

    \tblc{\myxinf-3}{6}{color07}{9}{i};
    \tblc{\myxinf-2}{6}{color07}{8}{j};
    \tblc{\myxinf-1}{6}{color07}{7}{k};
    \tblc{\myxinf}{6}{color07}{2}{l};

    \tblc{\myxinf-5}{2}{color07}{8}{a};
    \tblc{\myxinf-4}{2}{color07}{7}{b};
    \tblc{\myxinf-3}{2}{color07}{6}{c};
    \tblc{\myxinf-2}{2}{color07}{5}{e};
    \tblc{\myxinf-1}{2}{color07}{4}{f};
    \tblc{\myxinf}{2}{color07}{3}{g};

    \tblc{\myxinf-6}{1}{color07}{9}{i};
    \tblc{\myxinf-5}{1}{color07}{8}{j};
    \tblc{\myxinf-4}{1}{color07}{7}{k};
    \tblc{\myxinf-3}{1}{color07}{6}{m};
    \tblc{\myxinf-2}{1}{color07}{5}{n};
    \tblc{\myxinf-1}{1}{color07}{4}{o};
    \tblc{\myxinf}{1}{color07}{3}{p};

    \tblc{\myxinf-3}{0}{color07}{6}{q};
    \tblc{\myxinf-2}{0}{color07}{5}{r};
    \tblc{\myxinf-1}{0}{color07}{4}{s};
    \tblc{\myxinf}{0}{color07}{3}{t};

    \tblc{\myxinf-4}{-3}{color07}{8}{a};
    \tblc{\myxinf-3}{-3}{color07}{7}{b};
    \tblc{\myxinf-2}{-3}{color07}{6}{c};
    \tblc{\myxinf-1}{-3}{color07}{5}{e};
    \tblc{\myxinf}{-3}{color07}{4}{f};

    \tblc{\myxinf-5}{-4}{color07}{9}{i};
    \tblc{\myxinf-4}{-4}{color07}{8}{j};
    \tblc{\myxinf-3}{-4}{color07}{7}{k};
    \tblc{\myxinf-2}{-4}{color07}{6}{m};
    \tblc{\myxinf-1}{-4}{color07}{5}{n};
    \tblc{\myxinf}{-4}{color07}{4}{o};

    \tblc{\myxinf-2}{-5}{color07}{6}{q};
    \tblc{\myxinf-1}{-5}{color07}{5}{r};
    \tblc{\myxinf}{-5}{color07}{4}{s};

    \tblc{\myxlateqd-3}{12}{color07}{8}{a};
    \tblc{\myxlateqd-2}{12}{color07}{7}{b};
    \tblc{\myxlateqd-1}{12}{color07}{6}{c};
    \tblc{\myxlateqd}{12}{color07}{1}{d};

    \tblc{\myxlateqd-2}{7}{color07}{4}{f};
    \tblc{\myxlateqd-1}{7}{color07}{3}{g};
    \tblc{\myxlateqd}{7}{color07}{2}{h};
    \tblc{\myxlateqd}{6}{color07}{2}{l};

    \tblc{\myxlateqd-1}{2}{color07}{4}{f};
    \tblc{\myxlateqd}{2}{color07}{3}{g};
    \tblc{\myxlateqd}{1}{color07}{3}{p};
    \tblc{\myxlateqd}{0}{color07}{3}{t};

    \tblc{\myxlateqd}{-3}{color07}{4}{f};

\end{tikzpicture}
}
    \caption{Sample operation of $\oopt$ for $N=3$, $B=4$. Numbers on packets show provisional transmission moments for an algorithm with infinite buffer (shown on the right). Letters in light blue circles mark the identities of individual packets.
    }
    \label{fig:lateqd}
\end{figure}

\renewcommand{\tblx}{0.55}

\begin{definition}\label{def:oopt}
When the buffer overflows, the (clairvoyant) algorithm $\oopt$ chooses the queue that contains the packet $p$ with the latest expected transmission moment $e_p$ (in particular, packets with $e_p=\infty$ are pushed out first).
\end{definition}

Figure~\ref{fig:lateqd} shows sample operation of $\oopt$ for a simple example with $N=3$ and $B=4$. The figure also shows, for comparison, the potential operation of an algorithm with infinite buffer and LIFO order of operation, which shows a clearer picture of where the packet priorities (marked as numbers on the packets) come from; the packet identities are labeled with letters inside the light blue circles.

\begin{theorem}\label{OPTisOPT}
    $\oopt$ is optimal.
\end{theorem}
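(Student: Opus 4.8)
The plan is to prove optimality of $\oopt$ by an exchange argument: starting from an arbitrary optimal regular algorithm $\A$ (we may assume regularity by Lemma~\ref{lem:regular}), I will show that at the first time slot $t$ where $\A$'s push-out decision disagrees with $\oopt$, we can modify $\A$ so that it agrees with $\oopt$ at time $t$ without decreasing the total number of transmitted packets, and without creating a disagreement at any earlier time. Iterating this transformation (and arguing it terminates, or passing to a limit) yields an optimal algorithm that coincides with $\oopt$ everywhere, proving $\oopt$ itself is optimal. The key object is the family of provisional transmission times $e_p$ computed by the infinite-buffer run: these depend only on the instance $\tau$, not on the algorithm, so they give a fixed ``deadline-like'' labeling of packets that $\oopt$ greedily respects.

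The heart of the argument is the local exchange step. Suppose at time $t$ the buffer overflows and $\A$ pushes out a packet from queue $j$ while $\oopt$ pushes out a packet from queue $i$, where queue $i$ currently holds the packet $p^\star$ with the globally latest $e_{p^\star}$. I want to produce $\A'$ that instead drops the oldest packet of $\Q i$ at time $t$ and keeps the packet $\A$ had dropped from $\Q j$. The intuition is that the packet $\oopt$ discards is ``least useful'': in the infinite-buffer schedule it would be transmitted latest (or never), so deferring its removal can only hurt, and removing it now frees exactly one unit of memory just as any other choice would. Formally I would track, for each queue, a bijection between the packets $\A'$ holds and the packets $\A$ holds after time $t$, built so that $\A'$ can mimic every subsequent transmission of $\A$ except possibly trading which specific packet of a queue gets sent — which is irrelevant since packets are uniform. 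The care needed is that swapping the dropped packet changes the future occupancy only in queues $i$ and $j$, and one must check the buffer-size constraint $|\B'_{s}|\le B$ continues to hold for $\A'$ at all $s>t$; this follows because $\A'$'s per-queue sizes never exceed $\A$'s in queue $j$ and in queue $i$ the difference is absorbed by the fact that $\Q i$ was among the longest contributors to the overflow. One must also ensure we have not introduced an earlier disagreement, which holds by construction since we only touched time $t$.

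I expect the main obstacle to be making the exchange argument airtight when the queue $\oopt$ drops from is \emph{not} the queue where $\A$ drops, and in particular handling the bookkeeping of how a single swapped push-out propagates through many subsequent overflows: a naive swap might force $\A'$ to overflow (and hence drop) at a time when $\A$ did not, and one must show the packet $\A'$ is then forced to drop is no better than $p^\star$. The clean way is to phrase everything in terms of the $e_p$ values — show that among all packets present at time $t$, the one with maximal $e_p$ is the ``safest'' to discard, because any packet with a smaller $e_p$ that survives will be transmitted strictly before its own provisional time would have required, so it cannot be ``wasted,'' whereas keeping $p^\star$ can be. A secondary technical point is the termination/limiting argument: since each exchange strictly increases the length of the common prefix with $\oopt$'s decision sequence (or, for infinite instances, one argues on each finite truncation $\tau_n$ separately and uses that $\oopt(\tau_n)$ is then $\ge \A(\tau_n)$ for the truncation's optimum), optimality of $\oopt$ follows on every instance of finite duration, and hence by the definition of competitive ratio on infinite instances as well.
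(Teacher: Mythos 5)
Your overall route is the paper's: reduce to regular algorithms via Lemma~\ref{lem:regular} and perform an exchange at the first push-out decision where an optimal algorithm disagrees with $\oopt$. However, the one place where you actually argue feasibility of the swap is wrong as written. After the exchange, $\A'$ holds one packet \emph{more} than $\A$ in $\Q j$ (it keeps the packet $\A$ dropped there) and one packet \emph{less} in $\Q i$, so the claim that ``$\A'$'s per-queue sizes never exceed $\A$'s in queue $j$'' has the roles of $i$ and $j$ reversed, and the justification that the excess is ``absorbed because $\Q i$ was among the longest contributors to the overflow'' does not apply to $\oopt$ at all: $\oopt$ drops from the queue holding the packet with the latest provisional transmission time $e_p$, which need not be a long queue --- that reasoning belongs to $\LQD$. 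The correct justification, and the actual content of the theorem as the paper states it, is precisely the maximality of $e_{p^\star}$: the packet $\oopt$ discards is the one that would stay in the buffer the longest, so substituting it for any other dropped packet never increases the memory the algorithm needs at later times, and the modified algorithm remains feasible and transmits the same number of packets. You name this $e_p$-based argument as ``the clean way'' but leave it as a to-do, so the crux of the proof is missing from your write-up.

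Second, the iteration/termination (or truncation-and-limit) machinery is unnecessary and is not how the paper closes the argument. The paper chooses, among all algorithms that are optimal on the hypothetical counterexample $\tau$, one whose \emph{first} disagreement with $\oopt$ occurs latest (made regular by Lemma~\ref{lem:regular}); a single exchange at that time slot then produces a feasible algorithm transmitting the same number of packets but agreeing with $\oopt$ strictly longer, an immediate contradiction with the extremal choice. This removes the need to bound the number of exchanges or to pass to a limit, and it also shortens the bookkeeping you worry about: one only needs the single-swap feasibility and equal-throughput claim (again supplied by the latest-$e_p$ property), not a full bijective tracking of $\A'$ against $\A$ through all subsequent overflows.
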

\begin{proof}
Assume the opposite. Suppose there exists an input sequence $\tau$ on which $\oopt$ transmits strictly fewer packets than $\OPT$. Consider all algorithms that work optimally on $\tau$, and choose the algorithm $\ALG$ for which the first difference with $\oopt$ occurs the latest. By Lemma~\ref{lem:regular}, we can assume that $\ALG$ is regular.

Let $t_0$ be the moment of first difference between $\oopt$ and $\ALG$; note that this means that at $t_0$ the buffer overflows for both algorithms, and they make different push-out decisions. Consider the algorithm that makes this push-out decision according to $\oopt$ but otherwise makes the same decisions as $\ALG$. Since $\oopt$ chooses the packet with the latest expected transmission moment, the memory required for the new algorithm never exceeds $B$, and the new algorithm is feasible. Now this algorithm transmits the same number of packets but differs from $\oopt$ later, a contradiction.
\end{proof}

\def\myxvirt{9}
\def\myxafter{15}
\def\myxlqd{19}

\newcommand\myboxessetup[3]{
    \tbltext{-4}{#3}{{\bf $t=#1$}};
    \tikzroundedrect{-1}{#2}{\myxvirt}{#3}{$\oopt$};
    \tikzroundedrect{\myxafter-4}{#2}{\myxafter}{#3}{$\oopt$};
    \tikzroundedrect{\myxlqd-2}{#2}{\myxlqd}{#3}{$\LQD$};
}

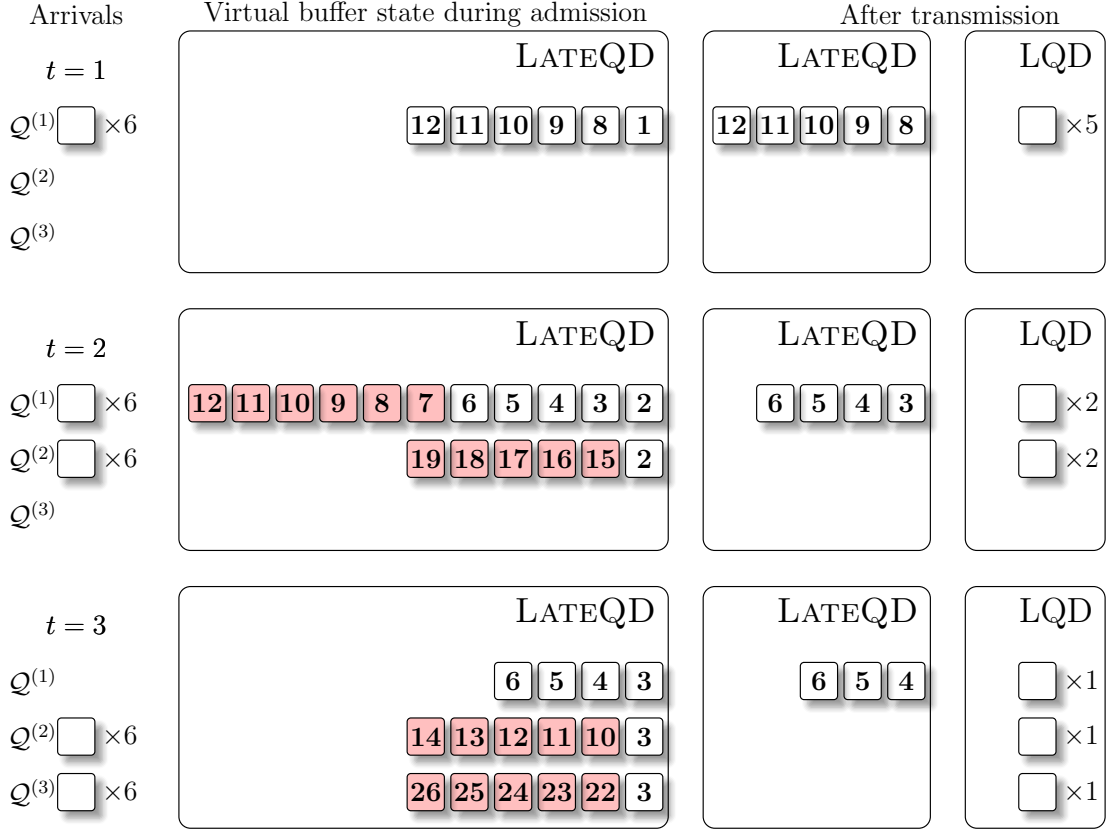
\begin{figure}[!t]\centering
\scalebox{\myscalefigtwo}{
\begin{tikzpicture}[x=1.2cm, y=1.2cm, scale=0.8]
    \tbltext{-5}{7}{$\Q 1$};
    \tbltext{-5}{6}{$\Q 2$};
    \tbltext{-5}{5}{$\Q 3$};
    \tbltext{-5}{2}{$\Q 1$};
    \tbltext{-5}{1}{$\Q 2$};
    \tbltext{-5}{0}{$\Q 3$};
    \tbltext{-5}{12}{$\Q 1$};
    \tbltext{-5}{11}{$\Q 2$};
    \tbltext{-5}{10}{$\Q 3$};

	\tbltext{-4}{14}{{Arrivals}};
    \tbltext{4}{14}{{Virtual buffer state during admission}};
    \tbltext{16}{14}{{After transmission}};

    \tblp{-4}{12}{color07}{};
	\tbltext{-3}{12}{$\times 6$};
	
	\myboxessetup{1}{10}{13};
	\myboxessetup{2}{5}{8};
	\myboxessetup{3}{0}{3};

    \foreach \ax in {8,...,12} {
      \pgfmathsetmacro\tx{\myxvirt-\ax+7}
      \tblp{\tx}{12}{color07}{\ax};
    }
    \tblp{\myxvirt}{12}{color07}{1};

    \foreach \ax in {8,...,12} {
      \pgfmathsetmacro\tx{\myxafter-\ax+8}
      \tblp{\tx}{12}{color07}{\ax};
    }

    \tblp{\myxlqd-1}{12}{color07}{};
    \tbltext{\myxlqd}{12}{$\times 5$};

    \tbltext{-4}{13}{{\bf $t=1$}};
    \tbltext{-4}{8}{{\bf $t=2$}};
    \tblp{-4}{7}{color07}{};
	\tbltext{-3}{7}{$\times 6$};
    \tblp{-4}{6}{color07}{};
	\tbltext{-3}{6}{$\times 6$};

    \tbltext{-4}{3}{{\bf $t=3$}};
    \tblp{-4}{1}{color07}{};
	\tbltext{-3}{1}{$\times 6$};
    \tblp{-4}{0}{color07}{};
	\tbltext{-3}{0}{$\times 6$};

    \foreach \ax in {2,...,6} {
      \pgfmathsetmacro\tx{\myxvirt-\ax+2}
      \tblp{\tx}{7}{color07}{\ax};
    }

    \foreach \ax in {7,...,12} {
      \pgfmathsetmacro\tx{\myxvirt-\ax+2}
      \tblp{\tx}{7}{color02}{\ax};
    }

    \foreach \ax in {15,...,19} {
      \pgfmathsetmacro\tx{\myxvirt-\ax+14}
      \tblp{\tx}{6}{color02}{\ax};
    }
    \tblp{\myxvirt}{6}{color07}{2};

    \foreach \ax in {3,...,6} {
      \pgfmathsetmacro\tx{\myxafter-\ax+3}
      \tblp{\tx}{7}{color07}{\ax};
    }

    \tblp{\myxlqd-1}{7}{color07}{};
    \tbltext{\myxlqd}{7}{$\times 2$};
    \tblp{\myxlqd-1}{6}{color07}{};
    \tbltext{\myxlqd}{6}{$\times 2$};

    \foreach \ax in {3,...,6} {
      \pgfmathsetmacro\tx{\myxvirt-\ax+3}
      \tblp{\tx}{2}{color07}{\ax};
    }

    \foreach \ax in {10,...,14} {
      \pgfmathsetmacro\tx{\myxvirt-\ax+9}
      \tblp{\tx}{1}{color02}{\ax};
    }
    \tblp{\myxvirt}{1}{color07}{3};

    \foreach \ax in {22,...,26} {
      \pgfmathsetmacro\tx{\myxvirt-\ax+21}
      \tblp{\tx}{0}{color02}{\ax};
    }
    \tblp{\myxvirt}{0}{color07}{3};

    \foreach \ax in {4,...,6} {
      \pgfmathsetmacro\tx{\myxafter-\ax+4}
      \tblp{\tx}{2}{color07}{\ax};
    }

    \tblp{\myxlqd-1}{2}{color07}{};
    \tbltext{\myxlqd}{2}{$\times 1$};
    \tblp{\myxlqd-1}{1}{color07}{};
    \tbltext{\myxlqd}{1}{$\times 1$};
    \tblp{\myxlqd-1}{0}{color07}{};
    \tbltext{\myxlqd}{0}{$\times 1$};
\end{tikzpicture}
}
    \caption{Sample operation of $\oopt$ for $N=3$, $B=6$. Numbers on packets show provisional transmission moments for an algorithm with infinite buffer. $\Q 1$ receives $B$ packets on $t\in\{1,2\}$, $\Q 2$ receives $B$ packets on $t\in\{2,3,4\}$, $\Q 3$ receives $B$ packets on $t\in\{3,4,5,6\}$. Packets shaded in red are pushed out. The right column shows $\LQD$ buffer state after transmission on the same input.}
    \label{fig:opt}
\end{figure}

The operation of $\oopt$ in comparison with $\LQD$ is illustrated on Fig.~\ref{fig:opt} for the case $N=3$, $B=6$. Numbers on the packets show the provisional expected transmission moments for incoming packets in LIFO order for an algorithm with infinite memory; clairvoyant $\oopt$ knows these numbers and uses them as packet priorities for push-out. We do not show packet identities here in order not to clutter the picture. Note that after $t=3$ $\oopt$ has $3$ packets in $\Q1$, and these packets will all be transmitted for a total of $6$ packets from $\Q1$, while $\LQD$ transmits only $4$ packets from $\Q1$. In total, on this input $\oopt$ transmits $1+2+3+3+3+2+1\times 5=19$ packets, and $\LQD$ transmits $1+2+3+3+2+1\times 6=17$ packets.




















\section{Model extensions}\label{sec:extensions}

\subsection{Fractional packets}\label{sec:frac}

We begin our extensions to the basic model by considering \emph{fractional packets}. In this setting, we allow algorithms to drop and store in queues fractions of a packet, thus making memory constraints continuous (but the time is still discrete and slotted and arriving packets are still integer). This means that now an algorithm is allowed, e.g., to push out half a packet from queue $1$ and half a packet from queue $2$ when the buffer overflows by one packet. This is an important extension specifically for LQD since now it will push out fractions of packets in cases when there are several longest queues.

Although this significantly extends the space of algorithms, next we show that the optimal algorithm can be left unchanged.

\begin{theorem}\label{thm:frac}
If $B$ is integer, in the setting with fractional packets there exists an optimal (clairvoyant) algorithm $\OPT$ that does not use fractional packets.
\end{theorem}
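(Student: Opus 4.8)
The plan is to show that fractional packets never help the optimal algorithm, so that $\oopt$ (which is integral by construction) remains optimal even in the extended setting. The natural approach is to take an arbitrary optimal fractional algorithm and "round" it to an integral one without losing throughput, exploiting the fact that $B$ is integer and all arrivals are integer.

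First I would set up an exchange/rounding argument analogous to the proof of Theorem~\ref{OPTisOPT}. By the fractional analogue of Lemma~\ref{lem:regular} (which goes through verbatim: a fractional algorithm can be made work-conserving and push out only on overflow without losing packets), we may assume the fractional optimum $\ALG$ is regular. I would then argue that among all optimal fractional regular algorithms there is one whose buffer contents are integral at every time step, by looking at the first time slot $t_0$ at which some queue holds a fractional amount and showing we can push the fractions around to make everything integral there without decreasing future throughput. The key observation is that if the total buffer content $\sum_i |\Q i|$ is an integer (which it is, since it equals $\min(B, \text{arrivals so far} - \text{transmissions so far})$ and all those quantities are integers for a regular work-conserving algorithm as long as the buffer isn't in a "partially full with fractions" state), and the transmission phase removes exactly one unit from each nonempty queue, then any fractional split must "cancel out" — the fractional parts of the queue lengths sum to an integer, so we can redistribute them to an integral configuration.

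More concretely, the cleanest route is probably to invoke Theorem~\ref{OPTisOPT}'s construction directly: define $\oopt$ exactly as in Definition~\ref{def:oopt} (it is an integral algorithm, since with integer arrivals and integer $B$ and the LIFO/FIFO conventions it only ever pushes out whole packets), and then show that no fractional algorithm beats it. For this I would mimic the proof of Theorem~\ref{OPTisOPT}: suppose a fractional algorithm $\ALG$ transmits strictly more than $\oopt$ on some $\tau$; take the optimal fractional $\ALG$ that agrees with $\oopt$ the longest, make it regular, and at the first point of disagreement $t_0$ replace $\ALG$'s push-out decision by $\oopt$'s. The point is that $\oopt$ pushes out (whole) packets with the latest expected transmission moments, so substituting its decision keeps memory within $B$ and does not decrease the count of transmitted packets — and now the modified algorithm agrees with $\oopt$ strictly longer, contradiction. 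Since $\oopt$ is integral, this proves the theorem with $\OPT = \oopt$.

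The main obstacle is making the "substitute $\oopt$'s decision and stay feasible / lose nothing" step rigorous in the fractional setting, because $\ALG$'s subsequent state is fractional while $\oopt$'s substituted decision is integral, so after the swap the two algorithms' buffer occupancies may differ in a fractional amount and one must carefully track that the modified algorithm can still mimic $\ALG$ going forward. The resolution is a domination/coupling argument: after swapping one push-out decision to the "latest-deadline" packet, the modified algorithm's buffer is, queue by queue, at least as good for the future (it has dropped content that would have been transmitted no earlier than what $\ALG$ dropped), so by an induction on time it transmits at least as many packets. Establishing this monotonicity lemma — that replacing a dropped amount by an equal amount with a later expected transmission time never hurts — is the real content, and everything else is bookkeeping to reduce to it.
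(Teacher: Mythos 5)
There is a genuine gap, and you have in fact pointed at it yourself: in your ``cleanest route'' everything hinges on the unproven monotonicity/coupling lemma, and the version of it you state is false as stated. After you replace $\ALG$'s fractional push-out at $t_0$ by $\oopt$'s integral one, the modified buffer holds \emph{strictly less} mass in the queue $\oopt$ dropped from and \emph{strictly more} in the queues $\ALG$ dropped from, so there is no ``queue by queue at least as good'' domination in either direction. What would actually be needed is a coupling on the multiset of residual packet fractions together with their provisional departure times $e_p$, plus a proof that the fractions retained by the modified algorithm really do get transmitted (in the fractional model a queue holding a residue $z<1$ only contributes $z$ to the throughput, so ``keeping a fraction with an earlier deadline'' is not automatically worth as much as the fraction $\ALG$ kept). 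That is precisely the hard content of the theorem, and your proposal defers it rather than proving it. Your first route has a similar problem: ``redistributing the fractional parts to an integral configuration'' is not a legal operation, because packet mass is tied to its destination queue; any redistribution is really a change of push-out decisions, and pushing out \emph{more} from some queue may cost future transmissions, which again needs exactly the kind of exchange argument you are trying to avoid. (Also, the claim that the total buffer content of a regular fractional algorithm is always an integer is only immediate at overflow slots; once some queue carries a residue $z<1$ it transmits only $z$, so totals can drift off the integers.)

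For comparison, the paper's proof sidesteps any comparison with $\oopt$ and any queue-wise domination. It fixes $\tau$, takes an \emph{optimal} fractional algorithm that (i) never pushes out accepted packets and (ii) minimizes the number of fractional packets it ever uses, and shows this minimizer can have at most one fractional packet in the buffer at any time: if two coexist, shift mass from the one transmitted later to the one transmitted earlier until one of them becomes integral; this lowers memory usage at all later slots and raises it at earlier slots only up to a rounding effect, which an \emph{integer} $B$ tolerates, contradicting minimality. With at most one fractional packet present at any time and $B$ integer, that packet can simply be rounded up. Optimality of $\oopt$ in the fractional setting is then a corollary of this theorem together with Theorem~\ref{OPTisOPT}, rather than something proved by extending the exchange argument of Theorem~\ref{OPTisOPT} to fractional adversaries. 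If you want to salvage your route, the merging-of-two-fractions trick is the missing idea that makes the rounding step rigorous.
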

\begin{proof}

Without loss of generality, we may assume that $\OPT$ does not push out already accepted packets.

Fix an input sequence $\tau$. Consider the operation of an arbitrary optimal algorithm that uses the smallest number of fractional packets (out of all optimal algorithms) on input $\tau$. 
First let us show that at every moment of time there can be at most one fractional packet in the buffer. Assume the opposite and choose the first time slot $t$ when two fractional packets reside in memory at the same time. Let us now increase the fractional packet that will be transmitted earlier (by assumption, they cannot be pushed out) and decrease the one that will be transmitted later so that one of them becomes integer, resolving ties arbitrarily. This operation can leave the rest of the algorithm's operation unchanged because it only decreases the total memory requirements on later time slots and increases on previous time slots only to the effect of rounding it up, which cannot violate an integer total memory constraint $B$. Moreover, it decreases the total number of fractional packets, which contradicts our assumption.

Now at every time moment at most one fractional packet is stored in memory. But since the global memory constraint $B$ is integer, we can round the sizes of all packets up without violating this constraint.
\end{proof}

\begin{corollary}
The algorithm $\oopt$ (Definition~\ref{def:oopt}) is also optimal for the setting with fractional packets.
\end{corollary}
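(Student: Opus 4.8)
The plan is to derive the corollary directly from Theorem~\ref{OPTisOPT} and Theorem~\ref{thm:frac}, with no new machinery. Fix a finite instance $\tau$ and keep the standing assumption of Theorem~\ref{thm:frac} that $B$ is integer; write $V_{\mathrm{i}}(\tau)$ and $V_{\mathrm{f}}(\tau)$ for the optimum number of transmitted packets on $\tau$ over all algorithms in the basic (integer) model and in the fractional model, respectively. Every integer algorithm is a special case of a fractional one, so $V_{\mathrm{i}}(\tau)\le V_{\mathrm{f}}(\tau)$ trivially; what needs proving is the reverse inequality together with the fact that $\oopt$ attains the common value.

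First I would apply Theorem~\ref{thm:frac}, which yields an optimal algorithm for the fractional model that never stores or drops fractional packets. Such an algorithm is, word for word, a legal algorithm in the basic model, so the number of packets it transmits on $\tau$ --- which equals $V_{\mathrm{f}}(\tau)$ by its optimality --- is at most $V_{\mathrm{i}}(\tau)$; hence $V_{\mathrm{f}}(\tau)=V_{\mathrm{i}}(\tau)$. Then I would note that $\oopt$ is itself admissible in the fractional model and runs there exactly as in the basic model: by Definition~\ref{def:oopt} it always pushes out one whole packet (the one with the latest $e_p$), and since arrivals are integer its buffer occupancy stays integer at every step, so the overflow it must repair is always an integer amount and its behaviour is unchanged; in particular it transmits $\oopt(\tau)$ packets in both models. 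By Theorem~\ref{OPTisOPT}, $\oopt(\tau)=V_{\mathrm{i}}(\tau)=V_{\mathrm{f}}(\tau)$, so $\oopt$ is optimal on $\tau$; as $\tau$ is arbitrary (the infinite-duration case following by the limit definition of competitive ratio from Section~\ref{sec:model}), $\oopt$ is optimal for the fractional setting.

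I do not anticipate a genuine obstacle, since the proof merely chains two results that are already established. The only point worth making explicit is that ``an integer algorithm regarded as a fractional one'' and ``the run of $\oopt$ under fractional rules'' are both well defined and literally unchanged --- immediate once one observes that enlarging to fractional packets only grows the feasible set and that $\oopt$ never produces a fractional packet.
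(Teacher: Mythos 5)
Your proposal is correct and follows the same route the paper intends: the corollary is stated without a separate proof precisely because it is the immediate combination of Theorem~\ref{thm:frac} (the fractional optimum is attained by an algorithm that never uses fractional packets, hence equals the integer optimum when $B$ is integer) with Theorem~\ref{OPTisOPT} ($\oopt$ attains the integer optimum). Your additional remark that $\oopt$ runs identically under fractional rules because arrivals and its push-out decisions keep buffer occupancy integer is a sensible explicit check of a point the paper leaves implicit.
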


\subsection{Structure of hard instances}\label{sec:hard}

The next step is to introduce the special form of adversarial input sequences that we use to construct hard instances for $\LQD$ and general online algorithms. In what follows we only consider infinite instances for which every queue $\Q i$ has a finite (and possibly empty) time interval $[b_i, e_i]$ such that packets arrive to $\Q i$ only during this interval, and on every $t\in[b_i,e_i]$ exactly $B$ packets arrive to $\Q i$. This special form of input instances proves to be sufficient for our constructions of lower bounds. Note that an infinite instance can potentially use an infinite number of queues, but if the intervals $[b_i,e_i]$ are finite for every $\Q i$ it can always be emulated by reusing queues that have become dead more than $B$ time slots ago.

\begin{definition}
We call a queue $\Q i$ with time interval $[b_i, e_i]$ \emph{dead} at time $t$ if $t> e_i$, \emph{dying} at time $t=e_i$, and \emph{live} if $t\in[b_i,e_i)$; we call $[b_i,e_i)$ the \emph{live interval} for queue $\Q i$. Note that before $b_i$ a queue is neither live nor dead (the metaphor is that it has not been born yet).
\end{definition}

\begin{proposition}\label{prop:opt}
On an infinite instance $\tau$ that has the above structure, $\oopt$ transmits the same set of packets as the algorithm $\ALG$ that operates as follows:
    \begin{itemize}
        \item after receiving new inputs, if the number of nonempty queues exceeds available memory, $\ALG$ takes one packet each from any $B$ queues (and its buffer becomes empty after transmission);
        \item otherwise, $\ALG$ takes one packet each from every live queue and operates equivalently to $\LQD$ on all dead and dying queues.
    \end{itemize}
\end{proposition}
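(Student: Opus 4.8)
The plan is to show that the algorithm $\ALG$ described in the proposition is exactly $\oopt$ up to the choice of which packets to push out among those with equal (or irrelevant) expected transmission moments — and since $\oopt$'s output set is determined by its priority rule, it suffices to verify that $\ALG$ always makes a push-out choice consistent with "drop the packet with the latest $e_p$ first." The key structural observation is what the provisional timestamps $e_p$ look like on an instance of this special form: since exactly $B$ packets arrive to $\Q i$ on each $t\in[b_i,e_i]$, running the buffer with $B=\infty$ and LIFO processing gives the packets of $\Q i$ provisional transmission moments that extend far past $e_i$ — in fact, during the live interval $[b_i,e_i)$ queue $i$ is being fed $B$ packets per slot while transmitting only one, so its $B=\infty$ backlog grows without bound, and the latest-timestamped packets in the whole buffer are always those sitting in currently-live queues. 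I would make this precise: at any time $t$, for a live queue the maximal $e_p$ among its packets strictly exceeds the maximal $e_p$ among packets of any dead or dying queue (a dead/dying queue receives no more arrivals, so its $B=\infty$ backlog is bounded and draining).

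With that in hand I would split into the two cases of the proposition. First, when the number of nonempty queues exceeds $B$: here $\ALG$ keeps one packet from each of some $B$ queues and drops everything else, so after transmission the buffer is empty. I must argue $\oopt$ does the same (as a set of transmitted packets). Since more than $B$ queues are nonempty, $\oopt$ can retain at most one packet per queue through the current slot anyway — any queue holding $\ge 2$ packets has its extra packets among the globally-latest timestamps and will be pushed out — so $\oopt$ also transmits exactly one packet from each of $B$ queues this slot and carries nothing over; the choice of which $B$ queues is immaterial to the count, and by the special structure (and the remark about reusing dead queues) it does not affect future behavior either. Second, the generic case where nonempty queues fit in memory: $\oopt$ never needs to drop from a queue unless the buffer overflows, and when it does overflow it drops the latest-$e_p$ packet, which by the observation above lies in a live queue; thus $\oopt$ protects all packets in dead and dying queues and behaves on them exactly like a work-conserving push-out policy that equalizes — i.e., like $\LQD$ — while on live queues it simply lets each transmit one packet per slot (it cannot do better: a live queue gets $B$ fresh packets next slot regardless). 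This matches $\ALG$'s description line for line.

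The main obstacle I expect is the bookkeeping that makes "$\oopt$ and $\ALG$ transmit the same \emph{set} of packets" a genuine induction rather than a slogan: one has to track, slot by slot, that the two algorithms hold the same multiset of packets per queue entering each slot, handle the ties in the priority rule (many packets may share $e_p=\infty$ or finite values), and check that $\oopt$'s restriction to the dead/dying queues really does coincide with $\LQD$ — in particular that when the buffer overflows, the total overflow can always be absorbed by pushing out live-queue packets, so that no dead/dying packet is ever touched. Verifying this last point rigorously requires knowing that the live queues collectively always hold enough packets to cover any overflow; this follows because on any slot where the buffer overflows there must be a live queue (otherwise no new arrivals, no overflow), and a single live queue receiving $B$ packets contributes up to $B$ droppable packets — but making the induction hypothesis strong enough to carry this cleanly is the delicate part. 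I would also invoke Theorem~\ref{OPTisOPT} and its corollary to identify $\oopt$ with "the" optimal algorithm throughout, and the reuse-of-dead-queues remark to keep the number of active queues bounded so the case analysis stays finite in each slot.
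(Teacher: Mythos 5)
Your argument hinges on the structural claim that the globally latest-$e_p$ packets always sit in currently live queues, so that ``$\oopt$ protects all packets in dead and dying queues'' and every overflow can be absorbed by dropping live-queue packets only. That claim is false, and it is load-bearing. When the buffer overflows, $\oopt$ first sheds the live queues' excess packets (those indeed have the latest $e_p$), but what remains is the set of retained dead/dying-queue packets plus one head per nonempty queue, and nothing guarantees this fits into $B$: whenever the number of live queues jumps (many queues born in the same slot, as at the start of every cycle of $\Phi_k$, or in the extreme case of the first bullet where even heads must be discarded), the heads alone crowd out the dead queues' backlog, and $\oopt$ is forced to push packets out of dead queues. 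The paper itself points this out in Section~\ref{sec:eval}: on $\Phi_k$, $\oopt$ does drop packets from dead queues at the beginning of each cycle. Your proposed patch --- ``a single live queue receiving $B$ packets contributes up to $B$ droppable packets'' --- does not help, because the shortfall is not in the amount of droppable live excess but in the fact that the heads plus the dead-queue backlog can already exceed $B$; dropping all live excess cannot fix that.

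Consequently, the real content of the proposition is exactly what your route never establishes: when $\oopt$ \emph{is} forced to drop among dead and dying queues, its choices coincide with $\LQD$ restricted to those queues. The paper gets this from the priority structure inside dead/dying queues: since transmission is LIFO and push-out is FIFO, the packets retained in a dead or dying queue of current length $\ell$ form a consecutive block of priorities roughly $t+1,\dots,t+\ell$, so the packet with the latest $e_p$ among all dead/dying queues lies in the longest such queue, and the latest-$e_p$ rule degenerates to longest-queue-drop there; combined with the observation that any extra packets $\oopt$ keeps in live queues under spare memory are flushed by the next slot's $B$ better-priority arrivals (so the transmitted sets of $\oopt$ and $\ALG$ still agree), this yields the statement. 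Your treatment of the first bullet (more than $B$ nonempty queues: keep $B$ heads, empty buffer, choice immaterial) is essentially the paper's, but the ``$\oopt$ never touches dead or dying queues'' premise must be replaced by the sequential-priority argument above, and your ``protect everything'' picture would in fact contradict feasibility in the forced-drop slots.
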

\begin{proof}
For the first item, note that the $\oopt$ priority (expected transmission time under infinite memory) is smallest for the first packet in every queue, so all of them will be chosen, up to $B$ in total. The priority of all other packets in a live queue is so large
that $\oopt$ does not accept them ($B$ packets with better priorities will arrive at the next time slot), and for dead and dying queues the priorities of all packets increase sequentially, so on dead and dying queues $\ALG$ and $\oopt$ operate as $\LQD$. The only difference between $\ALG$ and $\oopt$ is in the case when no queue becomes dead on a given time slot: $\oopt$ will accept additional packets into live queues and $\ALG$ simply leaves a part of its memory free; however, $\oopt$ will obviously push out these additional packets on the very next time slot, and the overall set of transmitted packets is the same for $\oopt$ and $\ALG$.
\end{proof}


\newcommand\transm[2]{\mathrm{Trans}^{(#1)}_{#2}}

\section{Simplified $\sqrt{2}$ lower bound for LQD}\label{sec:lqd}

Again, we consider only infinite instances introduced in Section~\ref{sec:hard}: each $\Q i$ has a finite (possibly empty) time interval $[b_i, e_i]$ such that packets arrive to $\Q i$ only during this interval and also possibly at time $0$, and on every $t\in[b_i,e_i]$ exactly $B$ packets arrive to $\Q i$. 


\begin{theorem}\label{SQRT2}
The competitive ratio of $\LQD$ is at least $\sqrt{2}$.
\end{theorem}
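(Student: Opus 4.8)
The plan is to build an explicit infinite instance of the special form from Section~\ref{sec:hard} and to compare the throughput of $\LQD$ (analyzed directly) against $\oopt$ (analyzed via Proposition~\ref{prop:opt}). I would work in the fractional-packet model (permissible by Theorem~\ref{thm:frac} and its corollary, since $\OPT$ loses nothing), which makes $\LQD$ behave cleanly: when several queues are longest it shaves them all down equally. The instance should use a sequence of ``generations'' of queues with nested live intervals, scaled so that the construction is self-similar in the limit. Concretely, I would have the first group of queues receive $B$ packets for a long block of time slots, then a second group born later with a shorter block, then a third, and so on, with each new group's live interval chosen so that when the older groups start dying, $\LQD$ has been forced to keep all the still-live queues artificially short (each holding only about the ``fair share'' $B/(\text{number of live queues})$), whereas $\oopt$, knowing the future, concentrates memory in the queues that will live longest and lets them drain at rate $1$ for their full remaining lifespan. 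The ratio between the two throughputs, optimized over the geometric parameters of the construction (block lengths and birth times), should approach $\sqrt 2$.

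The key steps, in order, are: (1) specify the instance — pick the number of generations $k$, and for generation $j$ the number of queues and the interval lengths, as functions of a scaling parameter; (2) track the $\LQD$ buffer state slot by slot: while all generations are live the buffer is full and split evenly, and the crucial estimate is that at the moment a generation dies, the queues it leaves behind are each of length only about $B$ divided by the current count of live queues, so the ``surviving'' throughput $\LQD$ can still extract is small; (3) compute $\LQD(\tau_n)$ by summing, over all time slots, the number of nonempty queues (equivalently the number of transmissions) — this is a bookkeeping sum over the generations; (4) use Proposition~\ref{prop:opt} to compute $\oopt(\tau_n)$: at the first overflow $\oopt$ (like $\ALG$ there) empties its buffer taking one packet from each of $B$ queues, and thereafter, once the overflow pressure subsides, it services every live queue and runs $\LQD$ on the dying tail — the point is that $\oopt$ effectively gets to ``save up'' memory in long-lived queues across the overflow and cash it in later; (5) form the ratio $\OPT(\tau_n)/\LQD(\tau_n)$, pass to the limit $n\to\infty$ (and let the generation parameters scale appropriately), and check the supremum of the resulting expression equals $\sqrt 2$.

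I would organize the arithmetic around the invariant that, while $m$ queues are simultaneously live and the buffer is saturated, $\LQD$ holds roughly $B/m$ in each and hence ``wastes'' a $(1-1/m)$ fraction of each queue's potential future transmissions relative to what an algorithm could extract by dedicating the buffer to that queue. Summing the geometric contributions of successive generations gives $\LQD \approx cB$ for an explicit constant $c$, while $\oopt \approx \sqrt 2\, cB$; the factor $\sqrt2$ emerges from optimizing a one-parameter family (the ratio of consecutive generation sizes), exactly as in the classical $\sqrt2$ bound of~\cite{HK+01,AKM08}, but the point here is that the special-instance structure plus Proposition~\ref{prop:opt} make the accounting transparent and, as the paper notes, generalizable.

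The main obstacle I expect is step~(2): making the slot-by-slot analysis of $\LQD$ rigorous through the transitions between phases — when a new generation is born, when an old one dies, and especially the handoff where the buffer goes from saturated to non-saturated. One must be careful that $\LQD$'s ``equalize the longest queues'' rule, applied to queues of genuinely different lengths (old dying queues are short, freshly born queues want $B$ each), really does leave the survivors at the claimed length; ties and the order of arrival/transmission within a slot need to be handled exactly as in the model of Section~\ref{sec:basic}. The $\oopt$ side is comparatively easy thanks to Proposition~\ref{prop:opt}, which already reduces $\oopt$ on instances of this form to a simple explicit algorithm. Once the $\LQD$ state is pinned down, the limiting ratio is a routine optimization.
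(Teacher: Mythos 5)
Your high-level skeleton---an explicit instance of the special form of Section~\ref{sec:hard}, a direct slot-by-slot analysis of $\LQD$, a clairvoyant competitor, and an optimization over a scale parameter---matches the paper's, but the proposal leaves the crux unproven and the mechanism you describe does not match a correct accounting. First, you never specify the instance or do the arithmetic: the step where $\sqrt{2}$ is supposed to ``emerge from optimizing a one-parameter family'' is deferred to the classical bound of~\cite{HK+01,AKM08}, which is exactly what the proof must supply. The paper's instance is not a cascade of nested, self-similar generations but a stationary sliding window: each queue $\Q{j}$ is live for an interval of length about $a$, one queue is born and one dies per time slot, and a burst at $t=0$ puts the system into steady state immediately; this stationarity is what turns your step (2), which you rightly flag as the main obstacle, into a one-line induction (Lemma~\ref{lem1}): every live queue holds $h$ or $h+1$ packets, where $ah+h(h+1)/2\le B$, and every dead queue decays by one packet per slot. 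Second, your description of the clairvoyant advantage is inverted. Devoting memory to ``queues that will live longest'' gains nothing: a queue still receiving $B$ packets per slot transmits one packet per slot no matter how much buffer it is given. The gain comes from the opposite move: keep a single packet in every still-arriving queue and hoard $p\approx\sqrt{2(B-a)}$ packets in the queue whose arrivals are about to cease, so it keeps transmitting for $p$ slots after death, versus $h$ slots for $\LQD$---which in this construction keeps live queues artificially \emph{long} (about $h$ each), not short, thereby wasting roughly $ah$ units of memory. The paper's competitor is exactly this simple offline algorithm; the per-slot throughputs are $a+p$ versus $a+h$, and with $a=C\sqrt{B}$ the ratio tends to $(C+\sqrt{2})/\sqrt{C^2+2}$, maximized at $C=\sqrt{2}$.

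Two smaller gaps. The theorem concerns $\LQD$ in the integral model of Section~\ref{sec:basic}; if you pass to fractional packets you are analyzing $\LQD^f$, a different algorithm, and Theorem~\ref{thm:frac} only covers the $\OPT$ side, so you would still need to transfer the bound back to integral $\LQD$. The paper avoids this by analyzing integral $\LQD$ directly, absorbing rounding into the ``$h$ or $h+1$'' slack, and proves the fractional statement separately (Proposition~\ref{cor1}) by choosing parameters with $B=ah+h(h+1)/2$ exactly. Finally, invoking $\oopt$ and Proposition~\ref{prop:opt} is unnecessary for a lower bound: any explicit offline algorithm suffices, and the paper uses a non-push-out competitor whose throughput can be read off directly, which keeps the accounting elementary.
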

\begin{proof}
Let $h=\max\map{w \in \mathbb{N}}{a w + \frac{w (w + 1)}{2} \leq B}$, where $a$ is a constant to be defined later.
    
Consider the following instance: at time $0$ the queue $\Q j$ receives $j$ packets for $j < h$ and $B$ packets for $h \leq j \leq h + a$. After that, $\Q j$ is live (receives $B$ packets) during time interval $\left[\max (0, j - h - a), \max (-1, j - h))\right]$ and does not receive anything else.
    

    \begin{lemma}\label{lem1}
        For the $\LQD$ algorithm, the queue $\Q j$ at time moment $t$ contains $|\Q j_t|=\max (0, h - (t - j))$ packets if $j \leq t + h$ and either $h$ or $h + 1$ packets if $t + h < j \leq t + h + a$.
    \end{lemma}
    \begin{proof}
    We prove the lemma by induction. For $t = 0$ the statement is obvious. Suppose that for $t = t_0 - 1$ the statement holds, and let us show it for $t = t_0$. Indeed, one packet will be transmitted from every queue, queues $\Q j$ with $j < t_0 + h$ do not receive any more packets, and queues $\Q j$ with $t_0 + h \leq j \leq t_0 + h + a$ receive $B$ packets each, with $\LQD$ equalizing live queues and leaving either $h$ or $h + 1$ packets in each. The queue $\Q{t_0 + h}$ will have exactly $h$ packets by the definition of $h$. 
    \end{proof}
    
Thus, at every time moment $\LQD$ has $a + h$ nonempty queues and thus sends out $a + h$ packets. The operation of $\LQD$ is schematically illustrated on Figure~\ref{fig:lqd}. The shaded polygon shows queue occupancy at time moment $t$: queues from $\Q t$ to $\Q{t + h - 1}$ are dead and are slowly depleting, $\Q{t+h}$ is dying, and queues from $\Q{t + h+1}$ to $\Q{t + h + a}$ are live and hold $h$ or $h + 1$ packets each.

Consider an offline algorithm $\ALG$ that never pushes out accepted packets, accepts $p = \max\map{w}{\frac{w (w + 1)}{2} \leq B - a}$ packets from a dying queue (this requires clairvoyancy) and leaves $1$ packet in every live queue. Obviously, starting from time moment $p$ $\ALG$ transmits $a + p$ packets on every time slot. Note that $\floor*{\sqrt{2 (B - a)}} - 1 \leq p \leq \sqrt{2 (B - a)}$.

The operation of $\ALG$ is schematically illustrated on Figure~\ref{fig:alg}. Again, the shaded polygon shows queue occupancy at time $t$: $\ALG$ leaves only one packet in each live queue and tries to hold on to as many packets from dead and dying queues (from $\Q t$ to $\Q{t+p}$) as possible.
    
\begin{figure}[t]
    \centering
    \includegraphics[width=.8\linewidth]{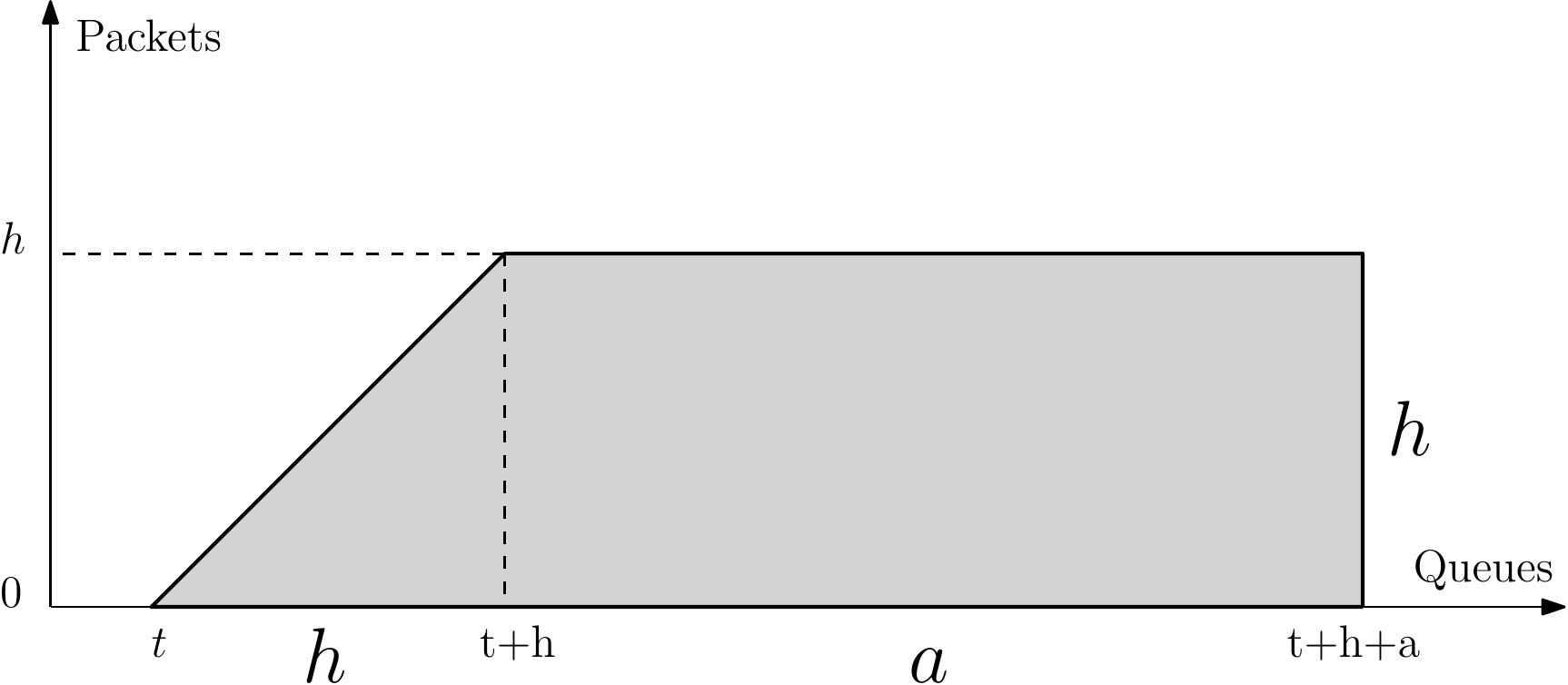}
    \caption{Queue occupancy for $\LQD$ at time moment $t$. The shaded polygon shows the number of packets in nonempty queues. $\LQD$ tries to equalize queue size as much as possible.}\label{fig:lqd}
\end{figure}
    
\begin{figure}[t]\centering
    \includegraphics[width=.8\linewidth]{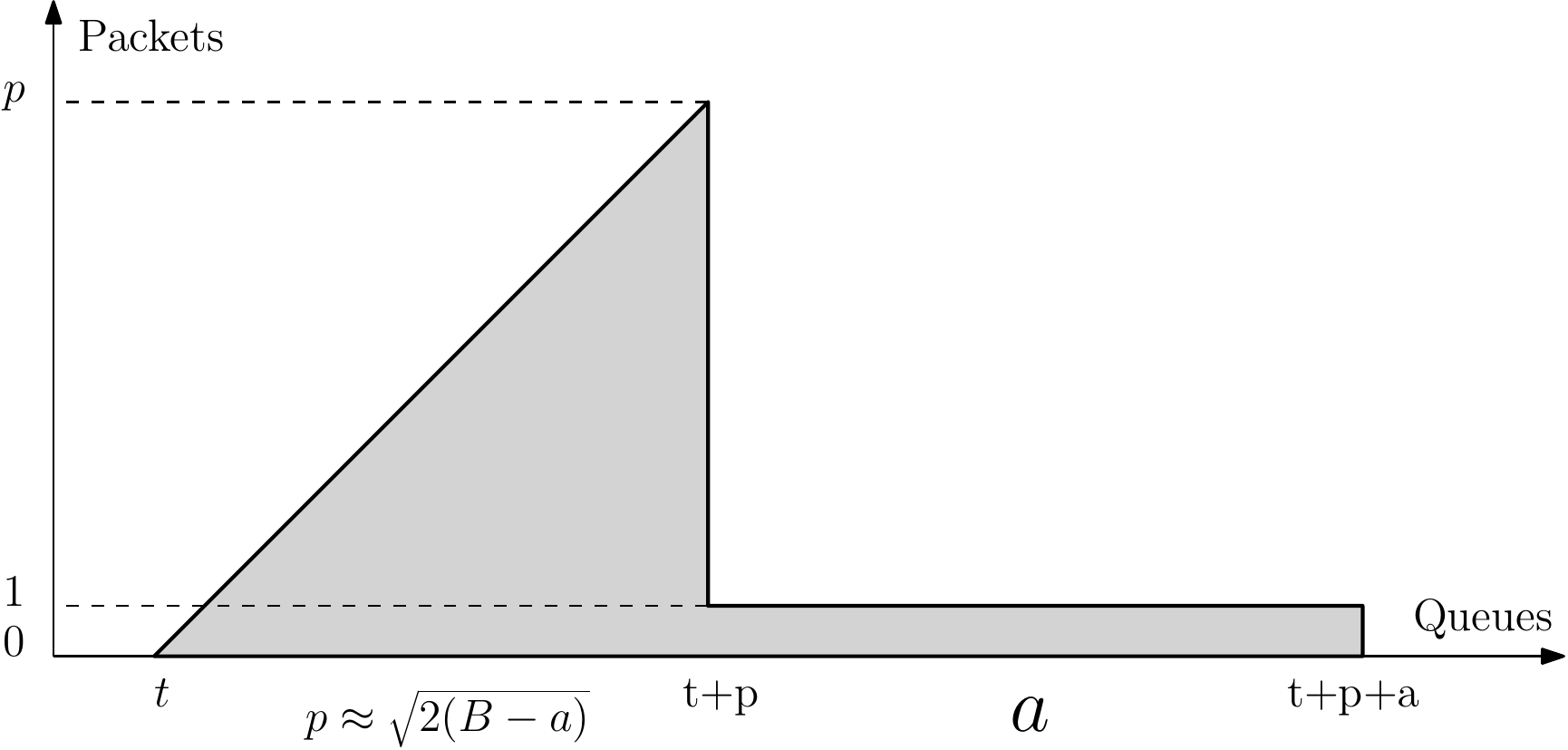}
    \caption{Queue occupancy for $\ALG$ at time moment $t$. $\ALG$ leaves one packet in every live queue and tries to preserve as many packets from dying queues as possible.}\label{fig:alg}
\end{figure}

Let $a = C\sqrt{B}$. Then
$$
        h = \max\map{w}{a w + \frac{w (w + 1)}{2} \leq B} = \floor*{\frac{-(2 a + 1) + \sqrt{(2 a + 1)^2 + 8B}}{2}},$$
        and we can find the ratio $\lim_{B \to \infty} \frac{a + p}{a + h}$ of the number of packets transmitted by $\ALG$ and $\LQD$ respectively:
    \begin{multline*}
            \lim_{B \to \infty} \frac{a + p}{a + h} = \lim_{B \to \infty} \frac{a + \floor{\sqrt{2 (B - a)}} - 1}{-\frac12 + \sqrt{(a + \frac12)^2 + 2B}} = \lim_{B \to \infty} \frac{a + \sqrt{2(B - a)}}{\sqrt{a^2 + 2B + O (a)}} = \\
             = \lim_{B \to \infty} \frac{a + \sqrt{2 (B - a)}}{\sqrt{a^2 + 2B}} = \lim_{B \to \infty} \frac{(C + \sqrt{2}) \sqrt{B}}{\sqrt{(C^2 + 2) B}} = \frac{C + \sqrt{2}}{\sqrt{C^2 + 2}}.
    \end{multline*}
    
Now $\frac{C + \sqrt{2}}{\sqrt{C^2 + 2}}$ achieves maximum at $\sqrt2$ for $C = \sqrt2$. 
It remains to show that we can construct a sequence of tests such that the competitiveness of $\LQD$ tends to $\sqrt{2}$ on this sequence. To do that, it suffices to construct a sequence of $(a_i, B_i)_{i=1}^{\infty}$ such that $\frac{a_i}{\sqrt{B_i}} \to \sqrt2$; this is obviously possible due to considerations above.
\end{proof}

\section{General lower bound on online algorithms}\label{sec:general}

Throughout most of this section, we consider a model with discrete time and integer memory constraint but with fractional packets (Section~\ref{sec:frac}). This means that each queue $\Q i$, $i=1,\ldots,N$, is defined by its (possibly fractional) number of packets $|\Q i|$, on every transmission phase we update $|\Q i|:=\max(|\Q i|-1, 0)$, and the buffer size constraint also accounts for fractional packets, $\sum_{i=1}^N|\Q i| \le B$. We denote by $\LQD^f$ the fractional counterpart of $\LQD$ defined in exactly the same way.

\begin{proposition}\label{cor1}
    The competitive ratio of $\LQD^f$ is at least $\sqrt{2}$.
\end{proposition}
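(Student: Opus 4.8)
The plan is to reuse almost verbatim the construction from the proof of Theorem~\ref{SQRT2}, observing that the only place where integrality of packets played any role was in the claim that $\LQD$ leaves ``either $h$ or $h+1$ packets'' in each live queue after equalizing; in the fractional setting this rounding disappears and the statement actually becomes cleaner, so the bound can only improve (and in fact the asymptotics are unchanged). Concretely, I would fix the same parameters $a=C\sqrt{B}$ and $h=\max\map{w\in\mathbb N}{aw+\frac{w(w+1)}{2}\le B}$, and use the same family of instances: at time $0$ queue $\Q j$ receives $j$ packets for $j<h$ and $B$ packets for $h\le j\le h+a$, and thereafter $\Q j$ is live on $[\max(0,j-h-a),\max(-1,j-h))$.

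Next I would restate and reprove the analogue of Lemma~\ref{lem1} for $\LQD^f$: by the same induction on $t$, every dead queue $\Q j$ with $j\le t+h$ holds exactly $\max(0,h-(t-j))$ packets, while every live queue with $t+h<j\le t+h+a$ holds \emph{exactly} $h$ packets (not ``$h$ or $h+1$'') after $\LQD^f$ equalizes the just-arrived $B$ packets in the live queues subject to the global constraint $B$. The induction step is the same bookkeeping as before: the dead/dying queues depend on the total remaining memory budget, and by definition of $h$ the leftover budget for the $a$ live queues is exactly $ah$ (in the fractional model we can allot it evenly with no rounding), so each live queue gets exactly $h$. Hence $\LQD^f$ keeps $a+h$ nonempty queues at every step and transmits $a+h$ packets per slot in steady state.

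For the offline (clairvoyant) side I would use the same algorithm $\ALG$ as in Theorem~\ref{SQRT2}: never push out accepted packets, grab $p=\max\map{w}{\frac{w(w+1)}{2}\le B-a}$ packets from each dying queue, and keep exactly one packet in every live queue; this $\ALG$ is still feasible in the fractional model (it uses only integer packets, and the fractional model only enlarges the space of feasible algorithms), so $\OPT\ge\ALG$ and in steady state $\ALG$ transmits $a+p$ packets per slot with $\floor*{\sqrt{2(B-a)}}-1\le p\le\sqrt{2(B-a)}$. The competitive ratio is then bounded below by $\lim_{B\to\infty}\frac{a+p}{a+h}$, and plugging $a=C\sqrt B$ gives exactly the same limit $\frac{C+\sqrt2}{\sqrt{C^2+2}}$ as computed in the proof of Theorem~\ref{SQRT2}, which is maximized at $\sqrt2$ for $C=\sqrt2$; taking a sequence $(a_i,B_i)$ with $a_i/\sqrt{B_i}\to\sqrt2$ finishes the argument.

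I do not expect a genuine obstacle here — the content is essentially a remark that the $\sqrt2$ lower bound instance degrades gracefully (indeed becomes exact) under fractionalization. The one point that needs a line of care is re-justifying the induction step of the fractional version of Lemma~\ref{lem1}: one must check that when $\LQD^f$ equalizes the $a$ live queues, the global budget left after the dead/dying queues is exactly $ah$ rather than something smaller, i.e., that $\LQD^f$ does not ``steal'' memory from the dead queues to push the live queues above $h$. This follows because $\LQD^f$ drops from the \emph{longest} queues first and the dead queues are already shorter than $h$, so they are never touched; the live queues are all tied at the would-be length $B$ before the drop and get equalized down together. With that observation in place, the rest is the identical computation, so I would simply write ``the proof repeats that of Theorem~\ref{SQRT2} with the rounding in Lemma~\ref{lem1} removed'' and spell out only this induction step.
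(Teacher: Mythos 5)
Your overall strategy (same instance, same offline algorithm, same limit computation) is sound, but the step you yourself single out as the one needing care is exactly where the argument as written breaks. The fractional analogue of Lemma~\ref{lem1} that you state --- every live queue holds \emph{exactly} $h$ packets --- is false for general $(a,B)$, and the justification ``by definition of $h$ the leftover budget for the $a$ live queues is exactly $ah$'' is incorrect: the definition $h=\max\map{w}{aw+\frac{w(w+1)}{2}\le B}$ gives only an inequality, with slack up to roughly $a+h$. Since $\LQD^f$ drops only on overflow, it fills the buffer completely and equalizes the live queues at the fractional level $(B-D)/a$, where $D$ is the dead-queue occupancy; this level lies in $[h,h+2)$ and is in general strictly above $h$ and non-integer, and the dead-queue profile inherited from past dying levels is then fractional as well, so your induction hypothesis does not propagate. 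Note also the direction of the error: you are \emph{under}estimating the per-slot throughput of $\LQD^f$ (it serves roughly $a+\ell^*$ nonempty queues with $\ell^*\ge h$, not $a+h$), which inflates the claimed ratio at finite $B$ --- precisely the direction one cannot afford in a lower-bound argument. The bound survives only because the discrepancy is $O(1)$ per slot and vanishes in the limit, but that correction has to be made explicitly.

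The paper sidesteps this entirely by a different and cleaner device: it restricts attention to parameter pairs with $B=ah+\frac{h(h+1)}{2}$ \emph{exactly}, in which case the equalized level is the integer $h$, no fractional splitting ever occurs, and $\LQD^f$ coincides with $\LQD$ on this instance, so Theorem~\ref{SQRT2} applies verbatim; the only remaining work is to exhibit an increasing sequence of integer pairs $(a_i,B_i')$ satisfying the exact identity with $a_i/\sqrt{B_i'}\to\sqrt2$, which the paper gets by rounding $B_i$ down to $B_i'=a_ih_i+\frac{h_i(h_i+1)}{2}$ and checking $B_i-B_i'\le a_i+h_i=o(\sqrt{B_i})$. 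To repair your write-up, either adopt this divisibility trick, or restate your lemma with the correct fractional level $\ell^*\in[h,h+2)$, bound the number of nonempty queues by $a+h+O(1)$, and observe that the limit $\frac{C+\sqrt2}{\sqrt{C^2+2}}$ is unaffected.
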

\begin{proof}
    Consider the instance defined in Theorem~\ref{SQRT2}. Note that if the problem parameters $h$, $a$, and $B$ are integers such that $B = a h + \frac{h (h + 1)}{2}$ exactly, with no rounding, then obviously, $\LQD = \LQD^f$ on this instance.
    
    It remains to show that there exists an increasing sequence of pairs of integers $(a_i, B_i)$ such that $\lim \frac{a_i}{\sqrt{B_i}} = \sqrt{2}$, and at the same time $B_i = a_i h + \frac{h (h + 1)}{2}$ for all $i$. For this, consider an arbitrary sequence of pairs satisfying the first condition, for each $(a_i, B_i)$ find $h_i = \map{w}{a w + \frac{w (w + 1)}{2}}$, and let $B_i^\prime = a h_i + \frac{h_i (h_i + 1)}{2}$. Obviously, $B_i - (a_i + h_i) \leq B_i^\prime \leq B_i$, and $a_i + h_i = o(\sqrt{B})$, so $\lim \frac{a_i}{\sqrt{B_i - (a_i + h_i)}} = \lim \frac{a_i}{\sqrt{B_i}} = \sqrt{2}$. Then $\frac{a_i}{\sqrt{B_i^\prime}} = \sqrt{2}$, so the sequence of pairs $(a_i, B_i^\prime)$ satisfies both conditions.
\end{proof}

\begin{theorem}
    The competitive ratio of any online algorithm in the model with fractional packets and discrete time is at least $\sqrt2$.
\end{theorem}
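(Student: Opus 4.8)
The plan is to reduce the general lower bound to the already-established lower bound for $\LQD^f$ (Proposition~\ref{cor1}) by showing that on the family of hard instances constructed in Theorem~\ref{SQRT2}, \emph{no} deterministic online algorithm can do asymptotically better than $\LQD^f$ does. The key observation driving this is that the hard instance from Theorem~\ref{SQRT2} is highly ``self-similar'': at every time step a fixed number of fresh queues receive $B$ packets each, and an online algorithm — not knowing which queues will remain live for how long — is forced into a symmetric situation with respect to the currently live queues. Since the instance only ever has finitely many live queues at once and they are statistically indistinguishable at the moment they are born, an adversary can adaptively choose which live queue ``dies'' next so as to penalize whatever asymmetric choice the online algorithm has made.

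First I would fix the hard instance $\tau$ of Theorem~\ref{SQRT2} with parameters $(a,B)$ satisfying $B = ah + \frac{h(h+1)}{2}$, and analyze an arbitrary deterministic online algorithm $\ALG$ on it. By Lemma~\ref{lem:regular} we may assume $\ALG$ is regular, so it is completely described by its push-out choices when the buffer overflows. At each time step $t$, there are $a+1$ live or dying queues (those with birth times in a window of length $a+1$) plus the dead queues $\Q t,\dots,\Q{t+h-1}$; the total memory the algorithm is holding across dead/dying queues can be at most $B - (\text{memory in live queues})$. The central claim to prove is that, against an adaptive adversary choosing the order in which the $a+1$ youngest queues are ``retired'' (made dying), any regular online $\ALG$ can, from some time onward, transmit at most $a + p + o(\sqrt B)$ packets per time step, where $p = \max\{w : \frac{w(w+1)}{2}\le B-a\}$ is exactly the per-step throughput of the clairvoyant benchmark $\ALG$ used in Theorem~\ref{SQRT2}. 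Indeed, at the moment a queue becomes dying it stops receiving packets, so its contribution to future throughput equals exactly the number of packets the algorithm is currently holding in it; the adversary will declare dying whichever of the $a+1$ youngest queues the algorithm has invested \emph{least} in, so the algorithm can never ``bank'' more than the average. Averaging forces the banked profile to be no better than the triangular profile of height $p$, whence $\ALG(\tau_n)/n \to a+p+o(\sqrt B)$ while $\OPT(\tau_n)/n \to a+h$ by Theorem~\ref{OPTisOPT} / Proposition~\ref{prop:opt} (the optimal value is unchanged since $\OPT$ is clairvoyant and does not care about the retirement order — the adversary's reordering is a relabeling of queues that leaves the multiset of arrival patterns fixed).

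Then, taking the same sequence $(a_i, B_i')$ with $\frac{a_i}{\sqrt{B_i'}}\to\sqrt2$ and $B_i' = a_i h_i + \frac{h_i(h_i+1)}{2}$ as in the proof of Proposition~\ref{cor1}, the ratio $\frac{a_i + h_i}{a_i + p_i}$ converges to $\frac{C+\sqrt2}{\sqrt{C^2+2}}$ evaluated at $C=\sqrt2$, i.e.\ to $\sqrt2$, exactly as in Theorem~\ref{SQRT2}. Since this holds for \emph{every} deterministic online $\ALG$, the general lower bound of $\sqrt2$ follows.

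The main obstacle is making the ``adaptive adversary forces the average'' argument fully rigorous: one must show that the memory constraint $\sum|\Q i|\le B$ together with the work-conserving property genuinely prevents the online algorithm from maintaining, across the $a+1$ youngest queues, a profile whose eventual throughput beats the symmetric triangular profile of height $p$ — and that this remains true after the adversary's adaptive retirement choices, which themselves depend on the algorithm's state. I expect this needs a potential-function or averaging argument over the $a+1$ youngest queues: define $\Phi_t$ to be the total memory the algorithm holds in queues that are already dying or dead, show that the adversary's choice keeps the per-step increment of $\Phi$ bounded by what the triangular profile would give, and handle the lower-order terms (the $o(\sqrt B)$ slack coming from the $+1$ extra live queue and from rounding in the definitions of $h$ and $p$) with the same estimates already used in Theorem~\ref{SQRT2}. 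The rest is bookkeeping identical to the previous two proofs.
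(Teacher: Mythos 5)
Your overall strategy is the paper's: run the hard instance of Theorem~\ref{SQRT2} against an adaptive adversary that always retires the live queue in which the online algorithm has invested least, use symmetry to argue the clairvoyant benchmark is unaffected, and conclude that no online algorithm can beat the $\LQD$-type behaviour. But there are two problems. First, a concrete error in the quantities: you bound the online algorithm's per-step throughput by $a+p$ with $p=\max\map{w}{\tfrac{w(w+1)}{2}\le B-a}$ and claim $\OPT(\tau_n)/n\to a+h$ --- this is backwards. The clairvoyant benchmark of Theorem~\ref{SQRT2} transmits $a+p$ per slot (one packet in each live queue plus the banked triangle of height $p$), while $\LQD$ transmits $a+h$ with $h<p$; bounding the online algorithm by the clairvoyant rate $a+p$ gives ratio $1$, and the ratio you finally write, $\frac{a_i+h_i}{a_i+p_i}$, is less than $1$, not $\sqrt2$. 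The claim you actually need is that any online algorithm against this adversary transmits asymptotically at most $a+h$ per slot while $\OPT\ge a+p$, so that the ratio is $\frac{a+p}{a+h}\to\sqrt2$.

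Second, and more substantively, the step you yourself flag as ``the main obstacle'' is precisely the content of the theorem, and the averaging sketch does not close it. The adversary's choice only gives the per-step constraint that the amount banked from the dying queue is at most $\frac{M}{a}$, where $M=B-D$ and $D$ is the memory currently held in dead queues; since $M$ depends on the algorithm's earlier banking decisions, ``never more than the average'' does not by itself yield a steady-state height of $h$ --- an algorithm could in principle bank little now so as to enlarge $M$ and bank much more later. The paper resolves exactly this point: after two normalizations (Lemma~\ref{lem2}: one may assume the algorithm never pushes packets out of dead queues, by simulating $B$ steps ahead; Lemma~\ref{lem3}: the initial transient of at most $(h-1)B$ packets does not affect the ratio), the algorithm's only remaining freedom is the per-step choice $x\le\frac{M}{a}$, and an exchange argument shows that raising $x$ by $\eps$ gains $\eps$ transmitted packets while costing at most $\frac{\eps M}{a^2}\approx\frac{\eps}{2}$ later, so the maximal choice $x=\frac{M}{a}$, i.e.\ $\LQD^f$, is optimal among online algorithms on this adversarial instance; the bound then follows from Proposition~\ref{cor1}. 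Your potential-function suggestion would need an argument of comparable strength, and nothing in the sketch supplies it, so as written the proposal has a genuine gap at its central step in addition to the swapped roles of $h$ and $p$.
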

\begin{proof}
    Let $B$, $a$, and $h$ satisfy $B = a h + \frac{h (h + 1)}{2}$.
    Consider an arbitrary online algorithm $\ALG$. We consider the instance from Theorem~\ref{SQRT2} and modify it in an adversarial way such that at every time moment, there are exactly $a + 1$ live queues present, and at every time moment, the shortest live queue of the algorithm $\ALG$ dies. Note that the number of packets transmitted by $\LQD^f$ and by the offline algorithm defined in Theorem~\ref{SQRT2} does not change due to symmetry, so it suffices to show that $\ALG$ sends out no more packets than $\LQD^f$.
    
    \begin{lemma}\label{lem2}
    For any deterministic online algorithm $\ALG$, for the given adversarial example there exists an online algorithm $\ALG^\prime$ that transmits the same set of packets but never pushes packets out of dead queues \textup(except possibly at time moment $0$\textup).
    \end{lemma}
    \begin{proof}
        Consider a time moment $t$. Note that since the input instance is fixed, we can simulate the operation of $\ALG$ for $B$ steps ahead and find the set of packets from the shortest queue that will be pushed out. Then $\ALG^\prime$ pushes these packets out at time moment $t$. By pushing packets out in advance, we only shorten the queue further and do not increase the total memory used. As we only shorten the shortest queue, the input instance can be left unchanged.
    \end{proof}

Note that if there exists an online algorithm with competitive ratio less than $\sqrt{2}$, then there exists an algorithm that transmits all packets from the first $h - 1$ queues and has the same competitive ratio on the considered instance.

\begin{lemma}\label{lem3}
For every online algorithm $\ALG$ with competitive ratio $\alpha > 0$, there exists an algorithm that transmits all packets from the first $h - 1$ queues and has the same competitive ratio on every instance with an infinite number of packets.
\end{lemma}
\begin{proof}
Consider an algorithm $\ALG^\prime$ that during the first $h - 1$ steps transmits only packets from the first $h - 1$ queues, and then on every step accepts the same packets as $\ALG$. Obviously, the number of packets transmitted by $\ALG$ and $\ALG^\prime$ on any input instance differ by at most $(h - 1)B$, which is a constant that does not influence the competitive ratio.
\end{proof}
    
Thus, by a combination of Lemmas~\ref{lem2} and~\ref{lem3} we can assume that $\ALG$ never pushes packets out of dead queues, and the number of packets transmitted from dead queues can be computed as the sum of the sizes of shortest queues on every time step. Note that $\LQD^f$ always transmits from all live queues, so to establish that $\ALG$ transmits no more than $\LQD^f$ it suffices to prove that it does not transmit more packets from dead queues.

Let us restate the problem. Let $B$ be the buffer size, $a$ be the number of live queues, $D$ be the number of packets in dead queues, so $M=B-D$ is an upper bound on the number of packets in live queues, and let $B$, $a$, and $h$ be as in Proposition~\ref{cor1}, i.e., $a\approx \sqrt{2B}$. The number $M$ depends on the time step, and initially $M = B - \frac{h (h - 1)}{2}$ (all queues with packets are live in the first time slot). By the assumptions we have accumulated on $\ALG$, the only freedom it has left is to choose how many packets remain in the queue $\Q\ast$ that becomes dead at the current time slot: we have assumed that $\ALG$ does not push out from dead queues.
Thus, the decision $\ALG$ makes on every time step reduces to choosing a number $x \leq \frac{M}{a}$ (since $\Q\ast$ is the shortest of $a$ queues with at most $M$ packets in total), adding $x$ to the number of transmitted packets (all packets from dead queues get transmitted), but over the next $\floor{x}$ time steps reducing $M$ by $x - t$ on each time step $t=1,\ldots,\floor{x}$ since packets in $\Q\ast$ now take up space in the buffer. $\LQD^f$ chooses on every step $x = \frac{M}{a}$; let us show that this is the optimal choice.

Suppose that $\ALG$ has chosen $x = y$ for some $y < \frac Ma$ and let us try to replace this choice with $x = y + \eps$, where $0 < \eps < 1$. After this modification, $\ALG$ transmits $\eps$ more packets from the current dying queue $\Q\ast$, but over the next $\ceil{y + \eps} - 1$ steps the memory bound $M$ decreases by $\eps$. Since $M$ has decreased, this modification may influence other choices $\ALG$ makes down the line.

After we have changed $y$ to $y + \eps$, during $\ceil{y + \eps}-1$ steps we would like to have $\eps$ fewer packets in live queues (since our instance has an infinite number of packets we can assume that $\ALG$ uses up the entire buffer). Consider some time step $t$ from this time interval. There are two cases. If the shortest live queue in $\ALG$ has at least $\frac{\eps}{a}$ packets, we reduce every queue by $\frac{\eps}{a}$ packets, the shortest live queue remains shortest, and no further changes are needed.

If, on the other hand, the shortest live queue has $z<\frac{\eps}a$ packets we reduce this queue to zero and remove the other $\eps-z$ packets from other queues arbitrarily. To show that this is possible, we estimate the number of packets in dead queues from above: a queue that became dead $k$ time slots ago cannot have more than  $\ceil{\frac{B}{a}} - k - 1$ packets, so the total number of packets in dead queues does not exceed $\frac12{\ceil{\frac{B}{a}}(\ceil{\frac{B}{a}} - 1)}$. Since $a\approx\sqrt{2B}$, there are $\frac14 B+O(\sqrt{B})$ packets in dead queues, and hence enough packets in live queues.


After this change, over each of the next $\ceil{y + \eps} - 1$ steps the algorithm additionally loses no more than $\frac{\eps}{a}$ transmitted packets; recall again that we are counting only packets transmitted from dead queues. Since $\ceil{y + \eps} - 1\le\frac Ma$, we have lost at most $\frac {\eps M}{a^2}$ packets, and since $M<B$ and $a\approx\sqrt{2B}$ we have lost a number of packets that tends to $\frac\eps 2$ as $B$ increases, so this change is favorable for $\ALG$.
Later choices of $x$ do not change since choices that might influence $M$ on those steps have not increased it.

By this reasoning, we see that for every $0\le x < \frac{\eps}a$ it is favorable for the algorithm to choose a larger $x$, that is, $\ALG(\sigma)$ is an increasing function of $x$. 
Now let us consider the choices of $x$ an arbitrary online algorithm makes and replace each choice, one by one, with the maximal possible. For each such change, the number of transmitted packets does not decrease, therefore the number of transmitted packets after all transformations can only increase. On the other hand, the set of choices after all transformations will coincide with the set of choices $\LQD^f$ makes since $\LQD^f$ by construction makes the largest possible choice of $x$. Therefore, $\LQD^f$ is optimal.
\end{proof}

\begin{corollary}
    The competitive ratio of any online algorithm in the basic model with discrete packets is at least $\sqrt{2}$.
\end{corollary}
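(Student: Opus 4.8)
The plan is to derive the discrete bound from the fractional-packet Theorem just proved, using the fact that allowing fractional packets does not change the offline optimum. First I would take an arbitrary deterministic online algorithm $\ALG$ for the basic model and view it as a (degenerate) algorithm for the fractional model that simply never stores or drops fractions of packets. The Theorem then applies to $\ALG$ directly: on the family of adversarial instances constructed there --- which have integer arrivals and integer parameters $B$, $a$, $h$ with $B = ah + \frac{h(h+1)}{2}$ --- the algorithm $\ALG$ transmits no more packets than $\LQD^f$, while the ratio between $\LQD^f$ and the clairvoyant offline algorithm built in Theorem~\ref{SQRT2} tends to $\sqrt{2}$.

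The only point that needs care is that the benchmark in the basic model is the discrete optimum $\OPT$, which a priori could be smaller than the fractional optimum implicit in the Theorem, so that the discrete competitive ratio might be smaller. Here I would invoke Theorem~\ref{thm:frac} together with its Corollary: because $B$ is an integer on every instance in the family, the optimal fractional clairvoyant algorithm achieves exactly the throughput of $\oopt$, the discrete optimal algorithm, so the fractional and discrete optima coincide on each such instance. Consequently the competitive ratio of $\ALG$ measured against the discrete $\OPT$ is bounded below by the same quantity, and taking the supremum over the family (equivalently, letting $a_i/\sqrt{B_i}\to\sqrt{2}$ along integer pairs with $B_i = a_i h_i + \frac{h_i(h_i+1)}{2}$, exactly as in Proposition~\ref{cor1}) yields $\alpha_{\ALG}\ge\sqrt{2}$ in the basic model.

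I would also remark that the adversary in the Theorem is specified purely in terms of which live queue of $\ALG$ is currently shortest; since $\ALG$ is a genuine discrete algorithm, this is unambiguous and the adversarial instance it generates is exactly the one analysed there, so none of that argument needs to be repeated. The step I expect to be the most delicate is essentially bookkeeping: ensuring that the instances witnessing the fractional lower bound are precisely the ones with integer $B$, so that Theorem~\ref{thm:frac} applies and the two optima are literally equal rather than merely asymptotically close. Everything else is immediate.
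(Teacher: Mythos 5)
Your proposal is correct and takes essentially the same route as the paper: a discrete online algorithm is a special case of a fractional one, so the adversarial construction of the fractional-model theorem applies verbatim, and the resulting bound transfers to the basic model. The paper closes the benchmark issue in one line by observing that the offline algorithm used in that construction is exactly the discrete one from Theorem~\ref{SQRT2}, so your additional appeal to Theorem~\ref{thm:frac} (equality of fractional and discrete optima for integer $B$) is valid but not actually needed.
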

\begin{proof}
We compared the number of packets transmitted by an arbitrary online algorithm with the number of packets transmitted by the offline algorithm from Theorem~\ref{SQRT2}, which does not use fractional packets.
\end{proof}

\section{Proving lower bounds by linear programming}\label{sec:linprog}

\subsection{Family of hard instances $\Phi_k$}

Our main instance for subsequent considerations is a family of infinite input instances $\Phi_k$ where queue $\Q{j}$ receives packets during the time interval $\left[k\floor{\frac{j-1}{k}}+1; j\right]$. We denote by $\transm{i}{[t,t']} (\A)$ the set of packets transmitted by an algorithm $\A$ from queue $\Q i$ over time interval $[t,t']$.

\begin{lemma}\label{LinearBound}
For the input $\Phi_k$, 
the number of packets transmitted by $\LQD$ from dead queue $\Q j$ is at most one greater than the number of packets transmitted by $\LQD$ from dead queue $\Q{j+1}$: $\left|\transm{j}{[e_j,\infty)} (\LQD)\right| \le \left|\transm{j+1}{[e_{j+1},\infty)} (\LQD)\right| + 1$.
\end{lemma}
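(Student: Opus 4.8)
The plan is to analyze the structure of $\LQD$'s behavior on the instance $\Phi_k$ and understand exactly how many packets $\LQD$ transmits from each dead queue. The key observation is that $\Phi_k$ has a block structure: queues $\Q j$ with the same value of $\lfloor (j-1)/k \rfloor$ all start receiving packets at the same time $k\lfloor (j-1)/k\rfloor + 1$, but they die at staggered times $e_j = j$. So within a block of $k$ consecutive queues, they are born together but die one after another. First I would set up notation: fix $j$ and let $t_0 = e_j = j$ be the moment $\Q j$ dies (becomes dying). At that moment $\LQD$ has been equalizing all the live queues in the current block, so $|\Q j_{t_0}|$ and $|\Q{j+1}_{t_0}|$ differ by at most the usual rounding slack — both are roughly $B/(\text{number of live queues})$, and in fact by an argument like Lemma~\ref{lem1} they are equal up to $\pm 1$.

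**Next I would** track what happens after $\Q j$ dies. Once $\Q j$ is dead, $\LQD$ depletes it by one packet per time slot (it never receives more packets, and $\LQD$ transmits from all nonempty queues), so the number of packets $\LQD$ transmits from $\Q j$ over $[e_j, \infty)$ is exactly $|\Q j_{e_j}|$ — unless at some later overflow $\LQD$ pushes packets out of $\Q j$. The point is that $\Q j$ is \emph{never} the longest queue once it is dead and depleting, because $\Q{j+1}$ (which died one step later and started from an almost-equal size) together with the still-live queues in later blocks are all at least as long; more carefully, at the time $\Q j$ dies, $\Q{j+1}$ is still live with size at least $|\Q j_{e_j}| - O(1)$, and after that $\Q{j+1}$ dies with a size that is at least $|\Q j_{e_j}| - 1$ (it got one extra time slot of arrivals but also transmitted once), and thereafter $\Q{j+1}$ depletes in lockstep with $\Q j$, staying one ahead. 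So $|\transm{j}{[e_j,\infty)}(\LQD)| = |\Q j_{e_j}|$ and $|\transm{j+1}{[e_{j+1},\infty)}(\LQD)| = |\Q{j+1}_{e_{j+1}}|$, and it remains to show $|\Q j_{e_j}| \le |\Q{j+1}_{e_{j+1}}| + 1$.

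**The heart of the argument** is comparing $|\Q j_{e_j}|$ with $|\Q{j+1}_{e_{j+1}}|$. If $j$ and $j+1$ are in the same block, they were born at the same time and $\LQD$ kept them equalized (up to $\pm 1$) right up until time $e_j = j$; then at the single time step from $e_j$ to $e_{j+1} = j+1$, queue $\Q{j+1}$ receives $B$ more packets and is re-equalized with the remaining live queues (one fewer now), while transmitting one; a short computation with the $\LQD$ equalization rule shows $|\Q{j+1}_{e_{j+1}}| \ge |\Q j_{e_j}| - 1$. If $j$ is the last queue of its block (so $j+1$ starts a new block), I need to check the boundary case separately: $\Q{j+1}$ is born freshly at time $k\lfloor j/k\rfloor + 1$, receives $B$ packets, and is equalized among the $k$ queues of the new block, so its size when it dies at $e_{j+1}=j+1$ is again comparable to $\Q j$'s dying size — here the claim $|\Q j_{e_j}| \le |\Q{j+1}_{e_{j+1}}| + 1$ should still hold because both dying sizes are governed by the same ``$B$ split among the live queues'' bound.

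**The main obstacle** I anticipate is handling the rounding/slack in $\LQD$'s equalization precisely enough across a block boundary, and confirming that once a queue is dead it is genuinely never pushed out of — i.e., ruling out the possibility that an overflow caused by a \emph{later} block (whose queues receive $B$ packets each) forces $\LQD$ to drop from an already-dead queue $\Q j$. I would handle this by showing that whenever the buffer overflows at a time when $\Q j$ is dead, the live queues of the currently active block, plus $\Q{j+1}$ if it is still alive or dead-but-longer, are all at least as long as $\Q j$, so $\LQD$ (which drops from the longest) never touches $\Q j$; this reduces to the monotonicity ``later-dying queues are never shorter than earlier-dying ones after both have died,'' which follows by induction from the equalization step plus the one-slot-per-step depletion being identical for all dead queues.
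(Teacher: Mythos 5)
There is a genuine gap: your argument hinges on the claim that once $\Q j$ is dead, $\LQD$ never pushes packets out of it, so that $\left|\transm{j}{[e_j,\infty)}(\LQD)\right|$ equals the dying size $|\Q j_{e_j}|$; this claim is false on $\Phi_k$, and so is the fallback inequality on dying sizes that you invoke at block boundaries. Take the paper's own example $k=4$, $B=6$ (Figure~\ref{fig:phi}b): at $t=5$ the four queues of the new block each receive $B$ packets, and $\LQD$, which pushes out \emph{iteratively} from whichever queue is currently longest until the buffer fits, equalizes every nonempty queue down to a single packet --- in particular it drops two packets from the dead queue $\Q4$, which held three. The flaw in your ``never the longest, hence never touched'' reasoning is exactly this iterative clipping: a dead queue need not be the longest at the moment of overflow to be pushed out of, because once the live queues have been cut down to its level it joins the set being equalized. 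Moreover, in the same example $|\Q4_{e_4}|=3$ after transmission while $|\Q5_{e_5}|=0$, so your cross-block claim $|\Q j_{e_j}| \le |\Q{j+1}_{e_{j+1}}|+1$ also fails; the lemma survives there only because the later push-out from $\Q4$ reduces its future transmissions --- precisely the effect your argument rules out.

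The paper's proof avoids dying sizes altogether and compares the two queues \emph{after} both are dead: during slot $j+1$ the queue $\Q{j+1}$ receives $B$ packets and is therefore the longest, so after $\LQD$'s equalization it is at least as long as $\Q j$, which by that time has been dead for only one slot (this one slot is where the ``$+1$'' comes from); from slot $j+1$ onward both queues are dead, and since $\LQD$'s subsequent push-outs only clip queues from above, the longer dead queue $\Q{j+1}$ transmits at least as many packets thereafter as $\Q j$ does. To repair your write-up you would need this comparison-and-monotonicity argument for dead queues under $\LQD$, rather than the no-push-out claim and the dying-size identity, both of which break at the start of each new block.
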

\begin{proof}
The queue $\Q{j+1}$ was longest during time slot $j+1$, and before that moment queue $j$ was dead for only one time slot.
\end{proof}

Sample operation of our algorithms on $\Phi_k$ for $k=4$ and two incoming cycles ($8$ queues) is shown on Figure~\ref{fig:phi}. Fig.~\ref{fig:phi}a shows which queues are live at which time slots. Figs.~\ref{fig:phi}b and~\ref{fig:phi}c show the operation of $\LQD$ and $\oopt$ on this example: a shaded square means that the algorithm transmits from this queue at this time moment, and the number inside shows how many packets are left after transmission. The total number of transmitted packets in this example (total number of shaded squares on Fig.~\ref{fig:phi}) is $31$ for $\LQD$ and $32$ for $\oopt$.

\begin{figure}[!t]
\setlength{\tabcolsep}{-3pt}
\begin{tabular}{ccc}
\scalebox{\myscale}{
\begin{tikzpicture}
    \foreach \ay in {1,...,8} {
      \pgfmathsetmacro\ty{\ay-1}
      \tbltext{0}{\ty}{$\Q{\ay}$};
    }
    \tbltext{0}{-1}{{$\mathbf{t=}$}};
    \foreach \ax in {1,...,8} {
      \tbltext{\ax}{-1}{{\bf \ax}};
    }
    \foreach \ay in {1,...,4} {
        \foreach \ax in {\ay,...,4} { 
            \pgfmathsetmacro\tx{\ax-1}
            \tblp{\ay}{\tx}{color07}{};
            \pgfmathsetmacro\ty{\ay+4}
            \pgfmathsetmacro\tx{\ax+3}
            \tblp{\ty}{\tx}{color07}{};
        }
    }
\end{tikzpicture}
}
& 
\scalebox{\myscale}{
\begin{tikzpicture}
    \foreach \ax in {1,...,11} {
      \tbltext{\ax}{-1}{{\bf \ax}};
    }
    \tblp{1}{3}{color04}{0};
    \tblp{1}{2}{color04}{0};
    \tblp{1}{1}{color04}{1};
    \tblp{1}{0}{color04}{1};
    
    \tblp{2}{3}{color04}{0};
    \tblp{2}{2}{color04}{1};
    \tblp{2}{1}{color04}{1};
    \tblp{2}{0}{color04}{0};

    \tblp{3}{3}{color04}{1};
    \tblp{3}{2}{color04}{2};
    \tblp{3}{1}{color04}{0};

    \tblp{4}{3}{color04}{3};
    \tblp{4}{2}{color04}{1};
    
    \tblp{5}{7}{color04}{0};
    \tblp{5}{6}{color04}{0};
    \tblp{5}{5}{color04}{0};
    \tblp{5}{4}{color04}{0};
    \tblp{5}{3}{color04}{0};
    \tblp{5}{2}{color04}{0};
    
    \tblp{6}{7}{color04}{1};
    \tblp{6}{6}{color04}{1};
    \tblp{6}{5}{color04}{1};

    \tblp{7}{7}{color04}{1};
    \tblp{7}{6}{color04}{2};
    \tblp{7}{5}{color04}{0};

    \tblp{8}{7}{color04}{3};
    \tblp{8}{6}{color04}{1};
    
    \tblp{9}{7}{color04}{2};
    \tblp{9}{6}{color04}{0};

    \tblp{10}{7}{color04}{1};
    \tblp{11}{7}{color04}{0};
\end{tikzpicture}
}
&
\scalebox{\myscale}{
\begin{tikzpicture}
    \foreach \ax in {1,...,9} {
      \tbltext{\ax}{-1}{{\bf \ax}};
    }
    \tblp{1}{3}{color05}{0};
    \tblp{1}{2}{color05}{0};
    \tblp{1}{1}{color05}{0};
    \tblp{1}{0}{color05}{2};
    
    \tblp{2}{3}{color05}{0};
    \tblp{2}{2}{color05}{0};
    \tblp{2}{1}{color05}{1};
    \tblp{2}{0}{color05}{1};

    \tblp{3}{3}{color05}{0};
    \tblp{3}{2}{color05}{2};
    \tblp{3}{1}{color05}{0};
    \tblp{3}{0}{color05}{0};

    \tblp{4}{3}{color05}{3};
    \tblp{4}{2}{color05}{1};
    
    \tblp{5}{7}{color05}{0};
    \tblp{5}{6}{color05}{0};
    \tblp{5}{5}{color05}{0};
    \tblp{5}{4}{color05}{0};
    \tblp{5}{3}{color05}{0};
    \tblp{5}{2}{color05}{0};
    
    \tblp{6}{7}{color05}{0};
    \tblp{6}{6}{color05}{0};
    \tblp{6}{5}{color05}{3};

    \tblp{7}{7}{color05}{0};
    \tblp{7}{6}{color05}{1};
    \tblp{7}{5}{color05}{2};

    \tblp{8}{7}{color05}{1};
    \tblp{8}{6}{color05}{1};
    \tblp{8}{5}{color05}{1};

    \tblp{9}{7}{color05}{0};
    \tblp{9}{6}{color05}{0};
    \tblp{9}{5}{color05}{0};
\end{tikzpicture}
}
\\
{\scriptsize (a) $\Phi_k$ for $k=4$}
& 
{\scriptsize (b) $\LQD$ operation, $B=6$}
& 
{\scriptsize (c) $\oopt$ operation, $B=6$}
\\
\end{tabular}
\caption{Sample operation on the $\Phi_k$ input instance for $k=4$ and two input cycles: (a) live queues in $\Phi_k$; (b) $\LQD$ operation on this instance for $B=6$; (c) $\oopt$ operation for $B=6$. Numbers show the number of packets left in a queue after transmission.}\label{fig:phi}
\end{figure}
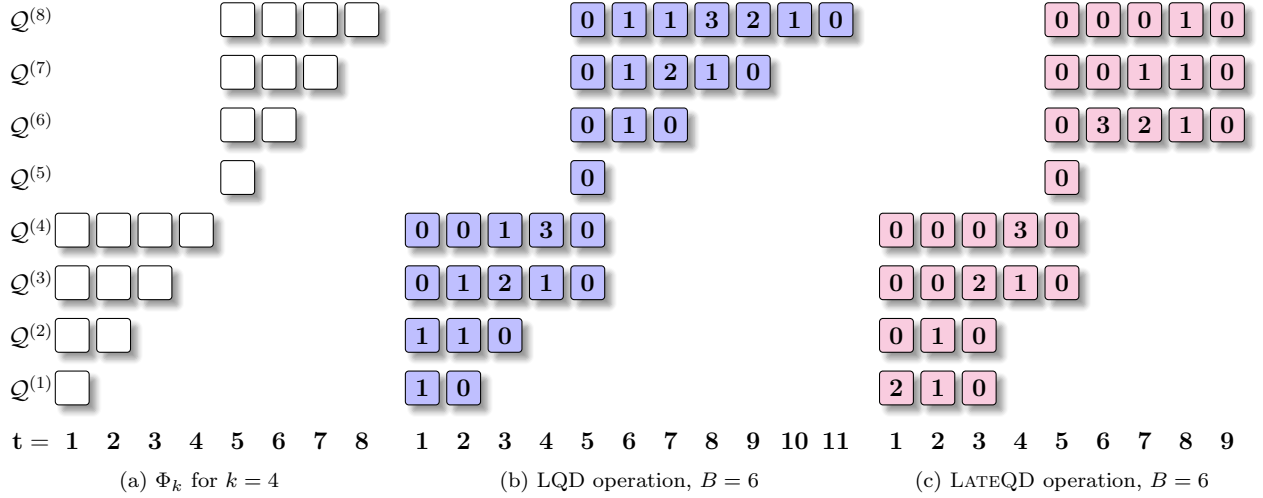

\subsection{Reduction to linear programming}

The explicit form of hard instances introduced above has made the search for lower bounds amenable to computer simulations. In this section, we prove several properties of the behaviour of online algorithms on this special case of hard instances and turn the search for a lower bound into a linear programming problem. Then we solve these problems in an experimental simulation and obtain a lower bound on the competitiveness of $\LQD$ that exceeds $\sqrt{2}$. Moreover, in the same vein we formulate such properties for any deterministic online algorithm and show a general lower bound that exceeds $\frac 43$ shown in~\cite{HK+01,AKM08}.

We again consider instances of the form $\Phi_k$ where queue $\Q{j}$ receives packets during the time interval $\left[k\floor{\frac{j-1}{k}}+1; j\right]$. We denote by $b_t$ the number of live queues which are not dying at time moment $t$.

\begin{theorem}\label{thm:lp}
On an input instance $\Phi_k$, denote by $a_t$ the number of packets sent from the queue that is dying (becomes dead) on time slot $t$. Then:
\begin{enumerate}[(1)]
    \item for any algorithm (in particular, $\oopt$), for every $t$
    $$b_t + \sum_{j\le t}\max\left(a_j-(t-j)+1,0\right) \le B;$$
    \item for any online algorithm, for every $t$
    $$b_ta_t + \sum_{j\le t}\max\left(a_j-(t-j)+1,0\right) \le B;$$
    \item for $\LQD$, moreover, 
    $a_t\le a_{t+1}+1;$
    \item the number of transmitted packets is equal to $\sum_t\left(a_t + b_t\right).$
\end{enumerate}
\end{theorem}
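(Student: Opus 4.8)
The plan is to establish each of the four claims by tracking the queue occupancies under the structure of $\Phi_k$, using the construction of $\oopt$ from Proposition~\ref{prop:opt} and the analysis already developed in Section~\ref{sec:general}. Throughout, I will use that on $\Phi_k$ each dead or dying queue depletes by exactly one packet per time slot (no new arrivals), so if $\Q\ast$ becomes dead at time $j$ with $a_j$ packets left after transmission, then at time $t\ge j$ it contributes $\max(a_j-(t-j),0)$ packets to the buffer \emph{after} transmission, and $\max(a_j-(t-j)+1,0)$ packets \emph{before} transmission on slot $t$ (the ``$+1$'' accounting for the packet it is about to send). This is the origin of the summands in (1) and (2).

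For claim~(1), I would argue that at any time $t$, just before the transmission phase of slot $t$, the buffer holds the $b_t$ packets that are about to be sent from the live-but-not-dying queues (at least one each, and for $\oopt$ or any regular algorithm these queues were just topped up) plus the contents of all queues that died at times $j\le t$, which is $\sum_{j\le t}\max(a_j-(t-j)+1,0)$. Since the total cannot exceed $B$, the inequality follows; this holds for any regular algorithm, and by Lemma~\ref{lem:regular} we may always pass to a regular one. For claim~(2), the difference is that an \emph{online} algorithm does not know which queue will die, so by the adversarial argument of Section~\ref{sec:general} (the adversary kills the shortest live queue) the dying queue $\Q\ast$ holds no more than the average, hence at least $a_t$ packets sit in \emph{each} of the $b_t$ non-dying live queues as well — wait, more precisely: the dying queue keeps $a_t$ after transmission, and since it was the shortest among the $a_t+1$-ish live queues, every other live queue has at least $a_t$ packets too, giving $b_t a_t$ in the non-dying live queues in addition to the dead-queue contribution; summing and bounding by $B$ gives~(2). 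Claim~(3) is essentially a restatement of Lemma~\ref{lem1}/Lemma~\ref{LinearBound}: for $\LQD$, queue $\Q{j}$ was longest (equalized with the other live queues) during slot $j$, and at slot $j+1$ it has been dead for exactly one step, so $|\transm{j}{}(\LQD)|\le|\transm{j+1}{}(\LQD)|+1$, which translates directly to $a_t\le a_{t+1}+1$. Claim~(4) is a bookkeeping identity: on every slot $t$ the algorithm transmits exactly one packet from each of the $b_t$ non-dying live queues, $a_t$ total packets get transmitted (over its remaining lifetime) from the queue dying at $t$ — and since every transmitted packet sits in exactly one queue, which dies at a unique time slot, summing $a_t+b_t$ over all $t$ counts every transmitted packet exactly once; one should also note the dying-queue packet transmitted \emph{on} slot $t$ itself is included in $a_t$ under the convention that $a_t$ counts packets from the queue dying at $t$ over $[t,\infty)$, or adjust the indexing by a harmless constant.

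The main obstacle I expect is getting the indexing conventions in (1), (2), and (4) mutually consistent — in particular pinning down whether $a_t$ counts the packet sent on slot $t$ itself, whether ``live but not dying'' at time $t$ means $t\in[b_i,e_i)$ strictly, and whether the buffer bound is applied before or after transmission — since an off-by-one here would break the clean form $\max(a_j-(t-j)+1,0)$. The substantive (as opposed to bookkeeping) content is really in~(2): one must justify that an online algorithm, facing the adversary that always kills the current shortest live queue, genuinely cannot do better than leave $a_t$ in every live queue, which is exactly the monotonicity-in-$x$ argument from the proof of the general lower bound in Section~\ref{sec:general}; I would cite that argument rather than repeat it. Everything else reduces to the depletion-by-one dynamics of dead queues under $\Phi_k$ together with the descriptions of $\oopt$ (Proposition~\ref{prop:opt}) and $\LQD$ (Lemma~\ref{lem1}).
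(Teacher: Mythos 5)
Your proposal is correct and follows essentially the same route as the paper's (much terser) proof: buffer-capacity accounting with one-packet-per-slot depletion of dead queues for (1), the indistinguishability/``adversary kills the shortest live queue'' argument for (2), the $\LQD$ equalization property (Lemma~\ref{LinearBound}) for (3), and direct counting of transmissions for (4). The off-by-one indexing tension you flag between the $\max\left(a_j-(t-j)+1,0\right)$ terms and the definition of $a_t$ is present in the paper itself and is asymptotically harmless, so your treatment matches the original in both approach and level of rigor.
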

\begin{proof}
Constraint (1)  says that the total number of packets does not exceed the memory size $B$ at any particular timeslot. Constraint (2) results from the fact that live queues are indistinguishable, and an adversary can always choose the queue with the smallest number of packets to become dead, so the number of packets in a live queue for any online algorithm in the worst case is at least the same as the number of packets in a dying queue. Constraint (3) means that for $\LQD$ the number of packets in any dead queue cannot exceed the number of packets in a dying queue on any particular timeslot (they are the same). Finally, (4) specifies the objective function for the linear programming problem.
\end{proof}

Theorem~\ref{thm:lp} implies that a solution of the optimization problem with objective function defined in (4) and constraints defined in (1)-(3) is at least the number of packets transmitted by $\LQD$; without (3), transmitted by any online algorithm; without (2), transmitted by any clairvoyant algorithm. In the optimization problem, $\max$ can be replaced with an additional variable and two constraints so this is in fact a linear programming problem. Note the adversarial nature of the examples produced under constraints (1)-(2): the solution assumes that an adversary chooses which queue will die based on the algorithm's choices.

Note that for the optimal clairvoyant algorithm the upper bound produced by this optimization problem can be achieved: $\OPT$ can simply choose the resulting $a_t$ for its operation. For $\LQD$ and any online algorithm, the result remains an upper bound, so the ratio of the two solutions yields a lower bound on the competitiveness of $\LQD$ and an arbitrary online algorithm, respectively.

Formally, to obtain a solution we need to input an infinite instance $\Phi_k$, and even to get a good approximation we need to use a large number of repeating cycles from $\Phi_k$, resulting in an infeasible linear programming problem. We circumvent this problem by noting that:
\begin{itemize}
    \item first, terms in the sum $\sum_{j\le t}\max\left(a_j-(t-j)+1,0\right)$ are obviously zero for $j<t-B$;
    \item second, all inequalities still hold if we replace $a_j$ by the average of the sequence $\{a_j,a_{j-k},a_{j-2k},\ldots\}$; for ${\bar a}_j=\frac1{\lfloor j/k\rfloor+1}\sum_{s=0}^{\lfloor j/k\rfloor}a_{j-sk}$ we have
    \begin{align*}
        b_t + \sum_{j=t-B}^t\max\left({\bar a}_j-(t-j)+1,0\right) & \le B \text{ for any algorithm}; \\
        b_ta_t + \sum_{j=t-B}^t\max\left({\bar a}_j-(t-j)+1,0\right) &\le B\text{ for any online algorithm};\\
        {\bar a}_t &\le {\bar a}_{t+1}\text{ for LQD};
    \end{align*}
    this holds simply as a linear combination of linear inequalities;
    \item third, as $t$ grows ${\bar a}_t$ becomes closer and closer to ${\bar a}_{t-k}$, so for sufficiently large $t$ elements ${\bar a}_t$ become cyclic up to any predefined constant $\epsilon$;
    \item thus, we can replace an ``unrolled'' sequence of several cycles with a cyclic linear programming problem where we assume that $a_t=a_{t-k}$ and constrain the problem to $k$ variables; this works for the lower bound on competitiveness because the sequence of indices $t$ contains a subsequence where ${\bar a}_t$ converge for both $\LQD$ and the optimal algorithm; this sequence will yield the bound.
\end{itemize}

We find a solution for these linear programming problems using the COIN-OR linear solver~\cite{Ralphs13}. Our experiments showed, in particular, that on the $\Phi_k$ input instance for $k=300$ and $B = 27272$:
\begin{itemize}
    \item the optimal algorithm can transmit $114546$ packets;
    \item the number of packets transmitted by $\LQD$ is at most $79392$, giving competitiveness of $\LQD$ of at least $1.4427902$ (better than $\sqrt{2}$);
    \item an arbitrary online algorithm can transmit at most $86292$ packets, giving a general lower bound on the competitiveness of any online algorithm of $1.32742316$.
\end{itemize}

Note that the general lower bound above is worse even than $\frac 43$, let alone $\sqrt{2}$ we show in Section~\ref{sec:general}. This is due to the structure of the hard instances $\Phi_k$. The linear programming formalization shown above can also cover the bound in Section~\ref{sec:general} if we code in the exact hard instance from Theorem~\ref{SQRT2} (in fact, linear programming virtually trivializes in this case). We have been able to find instances with intermediate bounds between $4/3$ and $\sqrt{2}$ but have not found anything better than $\sqrt{2}$; whether this is possible remains to be seen in future work.

The linear programming approach has two advantages over simple empirical evaluation that will yield a better lower bound for $\LQD$ in the next section. First, here we are actually estimating the (approximate) competitiveness on an infinite input instance. Second, the linear programming approach allows to obtain general lower bounds for an arbitrary deterministic online algorithm (although in our experiments we have not improved upon known results in this way). As for the final constant for $\LQD$, in the next section we will improve on it, but only because instances that are feasible for linear programming are much smaller than for direct simulation.

\section{Empirical evaluation}\label{sec:eval}

\pgfplotsset{every axis/.append style={
    width=.5\linewidth,
    height={.35\linewidth},
    grid=major,
    legend pos=north west, legend style={font=\footnotesize},
    legend cell align={left},
    grid style={dashed,gray!25},
    y label style={at={(axis description cs:0.15,.5)}, anchor=south},
    ylabel near ticks, xlabel near ticks
    } }
\pgfplotsset{every tick label/.append style={font=\tiny}}

\def\mylinewidth{.65pt}
\def\mythicklinewidth{.95pt}

\newcommand{\mylineplotnm}[4]{\addplot[color=#2,line width=\mythicklinewidth,#3] table[x index=0,y index=#4] {plots/#1.csv};}

\newcommand{\mylineplot}[6]{\addplot[color=#2,line width=\mylinewidth,#3,mark=#4,mark phase=#5,mark repeat=2,mark size=1.3pt,mark options={solid}] table[x index=0,y index=#6] {plots/#1.csv};}

\newcommand{\plotsrpt}[1]{\mylineplot{#1}{black!100}{solid}{o}{1}{1}}
\newcommand{\plotfauxsrpt}[1]{\mylineplot{#1}{black!50}{solid}{square*}{0}{3}}
\newcommand{\plotlfifo}[1]{\mylineplot{#1}{black!50}{solid}{oplus}{0}{4}}
\newcommand{\plotlfair}[1]{\mylineplot{#1}{black!30}{solid}{triangle}{0}{5}}
\newcommand{\plotlheur}[1]{\mylineplot{#1}{black!30}{densely dashed}{diamond}{0}{6}}
\newcommand{\plotlpq}[1]{\mylineplot{#1}{black!50}{densely dashed}{otimes}{0}{7}}
\newcommand{\plotfifo}[1]{\mylineplot{#1}{black!100}{dotted}{diamond*}{0}{8}}
\newcommand{\plotfair}[1]{\mylineplot{#1}{black!100}{dash dot}{triangle*}{0}{9}}
\newcommand{\plotloop}[1]{\mylineplot{#1}{black!100}{densely dashed}{square}{1}{10}}

\newcommand{\mylegend}[0]{
    \begin{tikzpicture}[scale=.95]
    \node[anchor=west] (opt) at (1,1.0) {{\scriptsize \ref{gr:srpt} SRPT}};
    \node[anchor=west] (opt) at (2.5,1.0) {{\scriptsize \ref{gr:faux} Heuristic SRPT}};
    \node[anchor=west] (opt) at (5,1.0) {{\scriptsize \ref{gr:lfifo} \lfifo}};
    \node[anchor=west] (opt) at (7.5,1.0) {{\scriptsize \ref{gr:fifo} FIFO}};
    \node[anchor=west] (opt) at (10,1.0) {{\scriptsize \ref{gr:fair} FAIR}};
    \node[anchor=west] (opt) at (12.5,1.0) {{\scriptsize \ref{gr:loop} LOOP}};
    \end{tikzpicture}\vspace{-0.1cm}
}

\begin{figure}[!t]
    \centering
    \begin{tabular}{cc}
	\begin{tikzpicture}
      \begin{axis}[
          xlabel={\footnotesize (a) Parameter $k^2/B$ for $k=300$}, 
          ylabel={\footnotesize Competitive ratio}, 
              ymin=1.44, ymax=1.45,
    ytick={1.44,1.442,1.444,1.446,1.448,1.45},
    yticklabels={1.440,1.442,1.444,1.446,1.448,1.450},
        ]
            \mylineplotnm{cr_b_300}{red!50!black}{solid}{1};
      \end{axis}
    \end{tikzpicture}
    &
	\begin{tikzpicture}
      \begin{axis}[
          xlabel={\footnotesize (b) Parameter $k^2/B$ for $k=3000$}, 
    ymin=1.44, ymax=1.45,
    ytick={1.44,1.442,1.444,1.446,1.448,1.45},
    yticklabels={1.440,1.442,1.444,1.446,1.448,1.450},
        ]
            \mylineplotnm{cr_b_3000}{red!50!black}{solid}{1};
      \end{axis}
    \end{tikzpicture}
    \\
	\begin{tikzpicture}
      \begin{axis}[
    ymin=1.44, ymax=1.45,
    ytick={1.44,1.442,1.444,1.446,1.448,1.45},
    yticklabels={1.440,1.442,1.444,1.446,1.448,1.450},
          xlabel={\footnotesize (c) Parameter $k^2/B$ for $k=30000$}, 
          ylabel={\footnotesize Competitive ratio}, 
        ]
            \mylineplotnm{bound1}{red!50!black}{solid}{1};
      \end{axis}
    \end{tikzpicture}
    &
	\begin{tikzpicture}
      \begin{axis}[
          xlabel={\footnotesize (d) Parameter $k^2/B$ for $k=300000$}, 
    ymin=1.44, ymax=1.45,
    ytick={1.44,1.442,1.444,1.446,1.448,1.45},
    yticklabels={1.440,1.442,1.444,1.446,1.448,1.450},
        ]
            \mylineplotnm{cr_b_300000}{red!50!black}{solid}{1};
      \end{axis}
    \end{tikzpicture}
    \end{tabular}
    \caption{Experimental results: competitive ratio as a function of $k^2/B$ for different values of $k$.}
    \label{fig:compete}\vspace{.5cm}

    \centering
	\begin{tikzpicture}
      \begin{axis}[
          width=\linewidth, height=.4\linewidth,
          xlabel={\footnotesize Time slot $t$ (dying queue $\Q t$)}, 
          ylabel={\footnotesize Packets from a dying queue}, 
          ymin=0, ymax=50000, xmin=0,xmax=89500,
          enlargelimits=false,
          ytick={0,10000,20000,30000,40000},
          yticklabels={0,10000,20000,30000,40000},
          xtick={1,20000,40000,60000,80000},
          xticklabels={1,20000,40000,60000,80000},
        ]
        \mylineplotnm{deadsize_time_LQD_30000}{red!80!black}{solid}{1};
        \mylineplotnm{deadsize_time_LQD_clamped_30000}{blue!80!black}{solid}{1};
        \mylineplotnm{deadsize_time_OPT_30000}{green!50!black}{solid}{1};
        \legend{{$\LQD$},{$\LQD$, Lemma~\ref{LinearBound}},{$\oopt$}};
      \end{axis}
    \end{tikzpicture}
    \caption{Number of packets accepted and transmitted from dying queues on three cycles of $\Phi_k$ for $k=30000$.}\label{fig:dead}
\end{figure}

In Section~\ref{sec:opt}, we have defined a relatively simple and straightforward construction for a clairvoyant optimal algorithm. Moreover, in Section~\ref{sec:hard} we have introduced a specific form of input instances where we are looking for hard instances. In this section, we show how to find the total number of transmitted packets for the optimal algorithm on an input instance of this form, which makes it possible to run computer simulations in search for a lower bound on the competitiveness ratio of $\LQD$ or any other online algorithm.

We have run on $\Phi_k$ the optimal algorithm $\oopt$ as defined in Proposition~\ref{prop:opt} and $\LQD$ and measured the resulting competitiveness values. Figure~\ref{fig:compete} shows the plot of the competitiveness for several different values of $k$. A reasonable value of $B$ grows approximately as $k^2$ because we would like the sizes of nonempty queues to have the same order as their number, which on $\Phi_k$ is always of the order of $k$; therefore, we show the results as a function of $k^2/B$. We see that for instances $\Phi_k$, the resulting competitiveness approaches a concave function with a single maximum. The value of $k^2/B$ for maximum competitiveness grows with $k$ but throughout our simulations has always remained in the interval $[3,4]$; finding how this maximum depends on $k$ remains an interesting open problem for future study. The largest competitiveness value we have achieved in this experiment is $>1.44546086$ for $k=300000$ and $k^2/B=3.6$, i.e., $B=2.5\cdot 10^{10}$. Note that this result is not just a suggestive experiment but also constitutes a proof of the lower bound since it corresponds to a specific hard example.

Figure~\ref{fig:dead} reflects important properties of the considered algorithms. It shows the plots of number of packets accepted from a given queue at the time slot when it becomes dead, for three cycles of $\Phi_k$ for $k=30000$. It shows three different plots: number of such packets for $\oopt$ (the virtually horizontal green line), for $\LQD$ (the red line with a large maximum at the end of a cycle), and for $\LQD$ estimated with the bound from Lemma~\ref{LinearBound} (the blue line). The latter reflects the number of packets transmitted from each queue after it became dead (up to $\pm1$ due to rounding); that is, the blue line always remains below the red line. The $\LQD$ plot grows so much because the size of each dying queue is the ratio of the amount of free memory (equal to the number of nonempty queues, which is of the order $k$) to the number of live queues, which decreases to $1$ at the end of an input cycle. Bottom points of the red and blue lines correspond to beginnings of a new cycle;
all queues that at the end of a cycle have the estimate from Lemma~\ref{LinearBound} less than the number of accepted packets will have a size equal to this minimum up to rounding. The breaking point of the blue line happens after the number of steps equal to this minimum; it occurs because all such queues end at the same time for $\LQD$. On the last cycle of the input instance, the bound from Lemma~\ref{LinearBound} becomes trivial, and blue and red lines coincide. The plot of packets sent by $\oopt$ from a dead queue is not shown because it almost exactly coincides with the green line; on $\Phi_k$, $\oopt$ can only drop packets from dead queues at the beginning of a new cycle, and no more than $k$ packets from each.

We have made the source code for reproducing our experiments freely available on GitLab~\cite{GITLAB}.

\section{Conclusion}\label{sec:concl}

In this work, we have presented new results on the competitiveness of $\LQD$ and general online algorithms for the case of a shared memory switch with uniform packets, a longstanding open problem posed in~\cite{HK+01,AKM08,G10}. Our results are based on an explicit construction of the optimal clairvoyant algorithm, $\oopt$, and a generalized construction of a series of hard instances. With these constructions, we have proved a new general lower bound, have reduced finding the number of processed packets for online algorithms, including $\oopt$ and $\LQD$, to solving linear programming problems, and also have been able to implement the algorithms efficiently enough to test the introduced hard instances numerically. With these new techniques, we have been able to show a number of new lower bounds on competitiveness values:
\begin{itemize}
	\item a general lower bound of $\sqrt{2}$ for any deterministic online algorithm;
    \item with linear programming, we have shown that $\LQD$ is at least $1.4427902$-competitive (and the general lower bound of $\sqrt{2}$ for every online algorithm can also be obtained by linear programming);
    \item in numerical experiments, we have found that $\LQD$ is at least $1.44546086$-competitive.
\end{itemize}
These results improve upon the previously known lower bound of $\sqrt{2}$ for $\LQD$ and a general lower bound of $4/3$~\cite{HK+01,AKM08}. We hope that our approach can lead to even stronger results in further research.

\bibliographystyle{plain}
\bibliography{np}

\end{document}